\documentclass[10pt,a4paper]{article}
\usepackage{amsmath,latexsym,amsfonts,amssymb}
\usepackage{wrapfig}
\usepackage{amscd,array,epic,eepic,calc,bbm,float}
\usepackage{ifthen,psfrag,verbatim,epsfig,graphicx,enumerate}
\usepackage{makeidx}
\usepackage{amsthm}
\usepackage{mathabx,bbm,bm}
\usepackage[dvipsnames]{xcolor}
\usepackage{natbib}
\usepackage[titletoc,title]{appendix}
\usepackage{subcaption}
\usepackage{hyperref}

\def\bql{\begin{equation}\label}
\def\eql{\end{equation}\noindent}

\def\brl{\begin{eqnarray}\label}
\def\erl{\end{eqnarray}\noindent}

\def\bro{\begin{eqnarray*}}
\def\ero{\end{eqnarray*}\noindent}

\def\brr{\begin{array}}
\def\err{\end{array}\noindent}

\def\bth{\begin{theorem}}
\def\eth{\end{theorem}}

\def\bcr{\begin{corollary}}
\def\ecr{\end{corollary}}

\def\bpr{\begin{proposition}}
\def\epr{\end{proposition}}

\def\blm{\begin{lemma}}
\def\elm{\end{lemma}}

\def\bdf{\begin{definition}}
\def\edf{\end{definition}}

\def\bas{\begin{assumptions}}
\def\eas{\end{assumptions}}

\def\bex{\begin{example}\rm}
\def\eex{\end{example}}

\def\bxx{\begin{exercise}\rm}
\def\exx{\end{exercise}}

\def\brm{\begin{remark}\rm}
\def\erm{\end{remark}}

\def\bmx{\begin{matrix}}
\def\emx{\end{matrix}}

\def\bma{\begin{pmatrix}}
\def\ema{\end{pmatrix}}

\def\bcs{\begin{cases}}
\def\ecs{\end{cases}}

\def\btb{\begin{center}\begin{tabular}}
\def\etb{\end{tabular}\end{center}}

\def\bit{\begin{itemize}}
\def\eit{\end{itemize}}

\def\a{\alpha}
\def\b{\beta}

\def\g{\gamma}

\def\l{\lambda}

\def\r{\rho}

\def\t{\tau}

\def\NC{{\cal N}}

\def\gb{{\bm g}}

\def\zerob{{\bf0}}

\def\thb{\boldsymbol{\theta}}
\def\lambdab{\boldsymbol{\lambda}}

\DeclareMathOperator*{\argmax}{arg\,max}

\def\ebb{{\mathbb E}}

\def\nf{\infty}

\def\tod{\buildrel\rm d\over\rightarrow}

\def\thesection{\arabic{section}}
\def\thesubsection{\arabic{section}.\arabic{subsection}}
\def\theequation{\arabic{section}.\arabic{equation}}
\def\thetheorem{\arabic{section}.\arabic{theorem}}

\def\theequation{\arabic{section}.\arabic{equation}}
\def\thetheorem{\arabic{section}.\arabic{theorem}}

\usepackage{framed}
\usepackage{comment}
\definecolor{shadecolor}{gray}{0.9}
\specialcomment{extra}{\begin{shaded}}{\end{shaded}}

\newtheorem{theorem}{Theorem}[section]
\newtheorem{lemma}[theorem]{Lemma}

\theoremstyle{definition}

\theoremstyle{definition}
\newtheorem{rmk}{Remark}
\theoremstyle{definition}
\newtheorem{example}{Example}

\newcommand{\bbE}{\mathbb{E}}
\newcommand{\mN}{\mathcal{N}}
\newcommand{\bzero}{\boldsymbol{0}}
\newcommand{\norm}[1]{\lVert#1\rVert}
\makeindex

\begin{document}
\date{}
\title{Bias-corrected empirical likelihood-based inference for the tail index under heavy-tailed models}
\author{Haodi Liang and Natalia Nolde}
\maketitle
% \eject
%\pagenumbering{roman}
\setcounter{page}{1}

%\begin{abstract}
%
%\end{abstract}
%\noindent{\bf Key words}: 

\setcounter{equation}{0}

\begin{abstract}
The tail index parameter of heavy-tailed probability models plays a key role in characterizing the tail decay of the underlying distribution function and is often involved in extrapolation procedures for various extreme value analysis questions. In this paper we revisit the question of tail index estimation and combine the ideas of bias-correction and empirical likelihood estimation to propose an estimator that offers an attractive alternative to some of the existing estimators. We develop an asymptotic theory for the proposed estimator and conduct simulation studies to demonstrate its performance in finite sample situations. The method is also applied to a data example for illustration.
\end{abstract}

\section{Introduction}

Heavy-tailed distributions are frequently used to model data from a variety of fields including finance, insurance and data networks; \cite{Resnick2007}. The tail behaviour of such probability models is characterized by the tail index, a parameter that captures the rate of tail decay of the underlying distribution function. Estimates of the tail index are often of interest in themselves signifying how heavy-tailed a model for a given dataset is. However, an estimate of the tail index is also often required for estimation of risk functionals such as 
high quantiles \citep{Weissman1978}, marginal expected shortfall \citep{Cai2014}, extreme conditional excess probabilities \citep{Abdous2005a}, extreme conditional quantiles \citep{NoldeZhang2020}, stress scenarios \citep{ZhouNolde2025}. %Extreme value approaches proposed in the above papers usually rely on extrapolation to go from an estimate that uses not too extreme data values to estimates of a risk functional associated with regions that contain few or no observations. The tail index frequently enters into the form of the extrapolation factor, thus driving both variability and bias of the resulting risk functional estimates. 
In this paper, we revisit the question of inference for the tail index in the setting of independent and identically distributed (i.i.d.) observations under the semi-parametric assumption of regular variation on the tail of the underlying distribution function.

%Regular variation is a common semi-parametric assumption used to represent heavy-tailed distributions. 
%In this paper, we revisit the question of inference for the tail index in the setting of independent and identically distributed (i.i.d.) observations.
%Estimation of the tail index has received a considerable attention in the literature. 
%There exist several approaches for estimation of the tail index, some of which we review below. 

Let $X_1,\ldots,X_n$ be i.i.d. random variables with a distribution function $F$ satisfying  the regular variation condition: $1-F(x) = x^{-1/\gamma}L_F(x)$ for $x>0$ and some $\g>0$, where $L_F(x)$ is a slowly varying function, i.e., for $x>0$, $L_F(tx)/L_F(t)\to1$ as $t\to\nf$. We write $1-F\in{\rm RV}_{-1/\g}$. Here $\gamma$ is the unknown tail index to be estimated. Equivalently, the quantile function admits the following representation:
\begin{equation} 
\label{reg.var}
F^{-1}\left(1-\frac{1}{x}\right) = x^{\gamma}L(x),\qquad x>0,
\end{equation}
where $L(\cdot)$ is again slowly varying.

Denote the order statistics based on $n$ observations by $X_{n,1} \leq \ldots \leq X_{n,n}$. Let $k = k_n$ be an intermediate sequence such that $k/n\to 0$, $k\to\infty$ as $n\to\infty$. A classic estimator of $\gamma$ is the Hill estimator \citep{Hill1975} given by
\begin{equation}\label{qhill}
\widehat\gamma_{H} = \frac{1}{k}\sum_{i=1}^{k} \log X_{n,n-k+i} - \log X_{n,n-k}.
\end{equation}
Under regularity conditions, $\widehat\gamma_H$ is consistent and asymptotically normal: $\sqrt{k}(\widehat\gamma_H - \gamma) \tod \NC(0,\gamma^2)$ for $n\to\nf$. This leads to an approximate $100(1-\alpha)\%$ confidence interval of the form:
%Under some general conditions, $\widehat\gamma_H$ is consistent. Under some additional second-order condition, $\widehat\gamma_H$ is asymptotically normal with
%$\sqrt{k}(\widehat\gamma_H - \gamma_0) \to N(0,\gamma_0^2),$ where $\gamma_0$ denotes the true value of the tail index.
%Therefore, a $100(1-\alpha)\%$ confidence interval for $\gamma_0$ based on the normal approximation can be constructed by
\begin{equation}
\label{ci.hill}
\mathcal{I}_{H}(\a) = [\widehat{\gamma}_H - z_{1-\alpha/2}\frac{\widehat{\gamma}_H}{\sqrt{k}},\widehat{\gamma}_H + z_{1-\alpha/2}\frac{\widehat{\gamma}_H}{\sqrt{k}}],
\end{equation}
where $z_{1-\alpha/2}$ is the $(1-\alpha/2)$-quantile of the standard normal distribution. The performance of the Hill estimator and confidence interval in \eqref{ci.hill} is often sensitive to the choice of sample fraction~$k$. While various methods exist to guide selection of~$k$ (see, e.g., \cite{DreesKaufmann1998,Danielsson_etal2001,Beirlant2002}), their effectiveness depends on the properties of the underlying distribution. The behaviour of the slowly varying function $L(\cdot)$ in~\eqref{reg.var} can lead to substantial bias of $\widehat\g_H$, which also affects coverage properties of $\mathcal{I}_{H}$, especially when sample size $n$ is not sufficiently large.  
 
Empirical likelihood (EL), as an alternative method for inference, has low model-misspecification risk and allows construction of confidence regions that are data-driven and range-respecting.
 %Alternatively, \cite{Lu2002} proposed an EL based method for estimating the tail index.
%The EL method allows to construct data-driven and range respecting confidence intervals.
%We next review the empirical likelihood method proposed by \cite{Lu2002} proposed for estimating $\gamma$.
\cite{Lu2002} propose the following procedure for deriving an EL-based confidence interval for tail index $\g$.  
Let $Y_j = j(\log X_{n,n-j+1} - \log X_{n,n-j})$ for $j = 1,\ldots,k$. Then the Hill estimator can alternatively be written as
\[\widehat{\gamma}_H = \frac{1}{k} \sum_{j=1}^{k} Y_j.
\]
It follows from \cite{Weissman1978} that, for $n$ large, $Y_j$'s ($j=1,\ldots,k$) are approximately i.i.d. exponential random variables with mean $\gamma$.
Now regard $\widehat\gamma_H$ as the mean of i.i.d. random variables $Y_1,\ldots,Y_k$ without any parametric model assumptions.
%Let $y_1,\ldots, y_k$ be realizations of $Y_1,\ldots,Y_k$. 
The empirical log-likelihood ratio function of $\gamma$ is given by
\[W_n(\gamma) = \sup_{p_1,\ldots,p_k}\left\{\sum_{j=1}^{k} \log(kp_j): \sum_{j=1}^{k} p_j = 1,\, \sum_{j=1}^{k}p_j (Y_j-\gamma) = 0,\, p_1,\ldots,p_k > 0 \right\}.
\]
Under certain general conditions, \cite{Lu2002} show that
\[-2W_n(\gamma) \tod \chi^2_1,\qquad n\to\nf.
\]
Then a $100(1-\alpha)\%$ asymptotic EL-based confidence interval for $\gamma$ can be constructed as
\begin{equation}
\label{ci.el}
\mathcal{I}_{EL}(\a) = \{\gamma: -2W_n(\gamma) < \chi^2_{1,1-\alpha}\},
\end{equation}
where $\chi^2_{1,1-\alpha}$ is the $(1-\alpha)$-quantile of the $\chi^2_1$ distribution. 

Numerical experiments in \cite{Lu2002} indicate that confidence intervals constructed using the empirical likelihood tend to provide better coverage than the intervals based on the normal approximation in~\eqref{ci.hill}.
%The simulation studies in \cite{Lu2002} indicate that EL confidence interval has better coverage precision compared to the confidence interval based on normal approximation, especially when $k$ is small. 
Coverage properties of $\mathcal{I}_{EL}$ can be further improved using the adjusted empirical likelihood (AEL) as suggested by \cite{Li2019}.
%\cite{Li2019} used the adjusted empirical likelihood (AEL) to further improve the coverage probability of the EL confidence intervals.

However, both the EL and AEL confidence intervals will have low coverage probabilities when the Hill estimator displays a bias in finite sample settings. This suggests exploring the possibility of combining the EL-based method for tail index estimation and confidence interval construction with bias correction. 

One approach to correcting for bias in the estimation of the tail index is based on an exponential regression model as proposed by \cite{Beirlant1999}, in which the authors impose an additional second-order condition:
%\cite{Beirlant1999} propose an approach for constructing a bias-corrected estimator of the tail index under an additional second-order condition that specifies the bias term:
%\cite{Beirlant1999} proposed a bias-corrected estimator for the tail index, which has low bias across a wide range of $k$ values, with an additional second-order condition that specifies the bias term:

\noindent\textbf{Condition ($R_L$)}:
There exists a real constant $\rho < 0 $ and a rate function $h(\cdot)$ satisfying
$h(x) \to 0$ as $x\to \infty$, such that for all $t \ge 1,$ as $x\to\infty$,
\[\log \frac{L(t x)}{L(x)} \sim h(x) \frac{t^{\rho}-1}{\rho}.
\]
Under this condition, \cite{Beirlant1999} show that, for $n$ large, the log-spacings $Y_1,\ldots,Y_k$ are approximately independent and exponentially distributed:
\begin{equation}
\label{approximation}
Y_j \stackrel{\cdot}{\sim} Exp\Big(\Big[\gamma + b_{n,k} \Big(\frac{j}{k+1}\Big)^{-\rho}\Big]^{-1}\Big),\qquad j=1,\dots,k
\end{equation}
with
\[b_{n,k} = h\Big(\frac{n+1}{k+1}\Big), \qquad 2\le k\le n-1.
\]
They propose to estimate tail index $\gamma$ and nuisance parameters $b_{n,k}$ and $\rho$ jointly using the maximum likelihood approach. 

%The method is effective in reducing the bias of the Hill estimator, however, this comes at the expense of a larger variance.   

%{\color{red}[TO RE-WRITE]Although  the bias of the estimation of $\gamma$ is significantly reduced, when $\rho$ is close to 0, the variance of the MLE of $\gamma$ increases drastically \cite{Beirlant1999}. To this end, they optimize the likelihood function within a constrained range of $\rho$.   Moreover, the MLE of $\rho$ through joint estimation is generally inconsistent. This would result in inaccurate estimation of the asymptotic variance of the MLE of $\gamma$ and thus lead to inaccurate coverage probabilities of the corresponding confidence intervals.}

When $\rho$ can be estimated consistently, the asymptotic variance of the maximum likelihood estimator of $\g$ from the exponential regression model is $((1-\r)/\r)^2$ that of the Hill estimator \citep{Beirlant2002}. So bias reduction comes at the cost of increased variance. Additionally, the maximum likelihood estimator of $\rho$ through the joint estimation is generally inconsistent (see Remark~4 in \cite{Beirlant2002}), leading to inaccurate estimation of the asymptotic variance, which will impact the coverage properties of confidence intervals for $\g$. 

Instead of assuming an exponential model on $Y_j$'s in \eqref{approximation},
we consider a method which only relies on the associated mean approximation:
\[\ebb(Y_j) = \gamma + b_{n,k}\left(\frac{j}{k+1}\right)^{-\rho},\quad j=1,\ldots,k.
\]
 Furthermore, rather than dealing with three parameters ($\g$, $b_{n,k}$ and $\r$), we propose to replace the estimator of $\r$ with a canonical value, motivated by a similar strategy adopted in \cite{Feuerverger1999}. % {\color{red} Moreover, our method leads to a simple chi-squared  limiting distribution of the EL ratio statistic and does not require explicit estimation of the asymptotic variance}. 
Simulation studies show that this works reasonably well across a range of probability distributions with reduction in variance and improved coverage properties of confidence intervals compared to the bias-corrected MLE of \cite{Beirlant1999}. 

The remainder of the paper is organized as follows. The methodology is described in Section~\ref{s.method}. We establish an asymptotic normality result for the proposed estimator of the tail index and show that the EL ratio statistic has a simple chi-squared  limiting distribution. The latter result serves as a basis for constructing the bias-corrected EL confidence intervals.
In Section~\ref{s.sim}, we report results of simulation studies that assess finite sample performance of the proposed estimator and empirical coverage probabilities of the associated confidence intervals. Section~\ref{s.appl} presents an application of the method to a real life data example. Concluding remarks are provided in Section~\ref{s.concl}. Proofs are relegated to the Appendix.

\section{Methodology}\label{s.method}

In this section, we introduce a bias-corrected empirical likelihood estimator of the tail index. It is constructed using estimating functions that arise from the least squares objective function motivated by the approximation in~\eqref{approximation}. 

Based on~\eqref{approximation}, we have the following mean approximation:
\[\ebb(Y_j)\approx \gamma + b_{n,k}\left(\frac{j}{k+1}\right)^{-\rho},\qquad j=1,\ldots,k.
\]
Since the second-order parameter $\rho$ is difficult to estimate in general, an alternative is to replace its value by a canonical choice $\r_c<0$, as was suggested in  \cite{Feuerverger1999}. This leads to the least squares objective function of the form:

\[S(\thb) = \sum_{j=1}^{k} \Big (Y_j - \gamma - b_{n,k} \Big(\frac{j}{k+1}\Big)^{-\rho_c}\Big)^2,\qquad \thb = (\g,b_{n,k}).
\]
The partial derivatives of $S(\thb)$ can be used to specify the estimating functions in the definition of the empirical likelihood. Here, the partial derivatives of $S(\thb)$ are given by
\[\frac{\partial S(\thb)}{\partial\gamma } = -2\sum_{j=1}^{k} \left\{ Y_j - \gamma - b_{n,k}\Big(\frac{j}{k+1}\Big)^{-\rho_c}\right\},\quad\frac{\partial S(\thb)}{\partial b_{n,k}} = -2\sum_{j=1}^{k}\left\{Y_j - \gamma - b_{n,k}\Big(\frac{j}{k+1}\right)^{-\rho_c}\Big\}\Big(\frac{j}{k+1}\Big)^{-\rho_c}.
\]
Now let
\begin{align*}
g_{1j}(Y_j;\thb) &= Y_j - \gamma -  b_{n,k}\left(\frac{j}{k+1}\right)^{-\rho_c},\\
g_{2j}(Y_j;\thb) &= \left (Y_j - \gamma - b_{n,k}\left(\frac{j}{k+1}\right)^{-\rho_c}\right )\left(\frac{j}{k+1}\right)^{-\rho_c},
\end{align*}
and set
\[\gb_j(Y_j;\thb) = \begin{pmatrix}
g_{1j}(Y_j;\thb)\\
g_{2j}(Y_j; \thb)\\
\end{pmatrix},\qquad j=1,\ldots,k.
\]
The empirical log-likelihood function of $\thb$ using $g_{1j}$ and $g_{2j}$ ($j=1,\ldots,k$) above as the estimating functions is equal to
\begin{equation}
\label{optimization}
\ell_{E}(\thb) = \sup_{p_1,\ldots,p_k}\Big\{\sum_{j=1}^{k}\log p_j: p_1,\ldots,p_k\ge0,\;\sum_{j=1}^{k} p_j = 1,\; \sum_{j=1}^{k} p_j \gb_j(Y_j;\thb) = \bzero
\Big\}.
\end{equation}
We then define a bias-corrected maximum empirical likelihood estimator (MELE) of $\thb = (\g,b_{n,k})$ as the maximizer of the above empirical  log-likelihood:
\begin{equation}\label{qMELE}
\widehat\thb_{E} = (\widehat\gamma_E,\widehat b_E) = \argmax_{\thb} \ell_{E}(\thb).
\end{equation}

The standard empirical likelihood theory requires the estimating functions to be unbiased \citep{QinLawless1994}. Replacing $\rho$ by $\rho_c$ in the least square objective function $S(\thb)$ leads to biased estimating functions $g_{1j}$ and $g_{2j}$. Under certain conditions on $b_{n,k},$ the bias term is sufficiently small so that the asymptotic properties of $\widehat\thb_E$ and the chi-squared  limiting distribution of the empirical likelihood ratio statistic still hold, as will be shown later.

The unique solution to the optimization problem in~\eqref{qMELE} exists provided the convex hull of points $\gb_j(Y_j;\thb)$ for $j=1,\ldots,k$ contains $\zerob$ \citep{Owen1990}, and is given by
\[p_j = \frac{1}{k[1+\boldsymbol{\lambda}^{\top}(\thb)\gb_j(Y_j;\thb)]},
\]
where $\boldsymbol{\lambda}(\thb)$ is the Lagrange multiplier satisfying
\begin{equation}
\label{Lagrange}
\sum_{j=1}^k \frac{1}{1+\boldsymbol{\lambda}^{\top}(\thb)\gb_j(Y_j;\thb)} = \bzero.
\end{equation}
This leads to the following expression for the empirical log-likelihood function:
\begin{equation}
\label{el.function}
\ell_E(\thb) = - k\log k - \sum_{j=1}^k \log[1+\boldsymbol{\lambda}^{\top}(\thb)\gb_j(Y_j;\thb)].
\end{equation}
When the convex hull of points $\gb_j(Y_j;\thb)$ does not contain $\zerob$, by convention $\ell_E(\thb) = -\infty.$

In the next theorem we prove asymptotic normality of $\widehat\thb_E$ and in particular that of $\widehat\g_E$ under an order condition on $b_{n,k}$. 

\begin{theorem}
\label{asymptotic.normality}
Consider a random sample $X_1,\ldots,X_n$ with distribution function $F$ satisfying $1-F \in RV_{-1/\gamma}$ for some $\g>0$ and Condition $(R_L)$ with rate function~$h$. Let $b_{n,k}=h\left(\dfrac{n+1}{k+1}\right)$ for $2\le k\le n-1$ and assume that, for any $\delta > 0$, $k^{1/2 + \delta} b_{n,k} = O(1)$ as $k,n\to \infty$ with $k/n \to 0$. Then the maximum empirical likelihood estimator $\widehat\g_E$ of $\g$ defined in \eqref{qMELE} satisfies:
\[\sqrt{k}(\widehat{\gamma}_E - \g) \overset{d}{\to} \mathcal{N}\left(0,\left(\frac{1-\rho_c}{\rho_c}\right)^2\g^2\right),\quad n\to\nf,\quad k=k_n\to\nf,\quad k/n\to0.\qquad
\]
\end{theorem}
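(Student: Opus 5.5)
We follow the standard route of \cite{QinLawless1994}. Since the number of estimating functions equals the number of parameters, the problem is just identified. The MELE therefore coincides with the solution of $\sum_{j=1}^k \gb_j(Y_j;\thb)=\zerob$, which is the least squares estimator from $S(\thb)$, with $\boldsymbol{\lambda}(\widehat\thb_E)=\zerob$. We first show that with probability tending to one $\ell_E$ attains its maximum $-k\log k$ at this root, which then equals $\widehat\thb_E$. This reduces the problem to the asymptotics of an explicit linear estimator.

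Write $x_j=(j/(k+1))^{-\rho_c}$, with averages $\bar x=k^{-1}\sum x_j\to 1/(1-\rho_c)$ and $k^{-1}\sum x_j^2\to 1/(1-2\rho_c)$. Solving the normal equations gives
\[
\widehat\gamma_E=\frac{1}{k}\sum_{j=1}^k w_j Y_j,\qquad w_j=\frac{\overline{x^2}-\bar x\,x_j}{\overline{x^2}-\bar x^2},
\]
so that $k^{-1}\sum w_j=1$ and $k^{-1}\sum w_jx_j=0$. We then invoke the representation underlying \eqref{approximation} from \cite{Beirlant1999}: $Y_j=\big(\gamma+b_{n,k}(j/(k+1))^{-\rho}\big)E_j+R_{j}$, where the $E_j$ are i.i.d.\ standard exponential (via Rényi's representation) and $k^{-1/2}\sum|w_jR_j|=o_P(1)$. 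Hence
\[
\sqrt{k}(\widehat\gamma_E-\gamma)=\frac{\gamma}{\sqrt k}\sum_j w_j(E_j-1)+\sqrt k\, b_{n,k}\,\frac1k\sum_j w_j\Big(\frac{j}{k+1}\Big)^{-\rho}E_j+o_P(1).
\]

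The first term is handled by Lindeberg's CLT, since the weights $w_j$ are uniformly bounded. Its variance is $\gamma^2k^{-1}\sum w_j^2\to\gamma^2\int_0^1 w(u)^2du$, with $w(u)=(c_2-c_1u^{-\rho_c})/(c_2-c_1^2)$, $c_1=1/(1-\rho_c)$ and $c_2=1/(1-2\rho_c)$. This integral equals $c_2/(c_2-c_1^2)$. Now $c_2-c_1^2=\rho_c^2/((1-\rho_c)^2(1-2\rho_c))$, so the variance is $\gamma^2((1-\rho_c)/\rho_c)^2$, as claimed.

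The second term is the misspecification bias caused by using $\rho_c\neq\rho$. Its mean is $\sqrt k\,b_{n,k}\,O(1)$, which is not cancelled, because $k^{-1}\sum w_jx_j=0$ does not make $\sum w_j(j/(k+1))^{-\rho}$ vanish. This is where the assumption enters: taking $\delta>0$ in $k^{1/2+\delta}b_{n,k}=O(1)$ gives $\sqrt k\,b_{n,k}=O(k^{-\delta})\to0$. Its fluctuation is also of order $b_{n,k}$, hence negligible.

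The main obstacle is controlling the remainder $R_j$ uniformly in $j\le k$ under Condition $(R_L)$. The approximation $\log(L(tx)/L(x))\sim h(x)(t^\rho-1)/\rho$ must be applied at random arguments $U_{n,k}$-ratios, and it must be uniform over $t\ge1$. We plan to use Potter-type bounds for second-order regular variation, as in \cite{Beirlant2002}. These give errors $o(b_{n,k})$ uniformly, with small $j$ (where $x_j$ is large) handled by the boundedness of $w_j\,(j/(k+1))^{-\rho_c}$ summability together with $\rho_c<0$. A secondary point is verifying that $\zerob$ lies in the convex hull of the $\gb_j$ at the root with probability tending to one. This follows because the $Y_j$ take values on both sides of the fitted line.
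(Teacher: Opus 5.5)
Your argument is correct, but it takes a different route from the paper's proof.

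\textbf{The paper's route.} The paper keeps the general \cite{QinLawless1994} machinery. It first establishes $k^{-1/2}\sum_j\gb_j(Y_j;\thb_0)\tod\mN_2(\zerob,S_{11})$ by a Lindeberg--Feller CLT applied after the \cite{Beirlant1999} decomposition. It then Taylor-expands the joint system $Q_k(\thb,\lambdab)$ around $(\thb_0,\zerob)$. Finally it reads off $\gamma_0^2(1-\rho_c)^2/\rho_c^2$ as the $(1,1)$ entry of $[S_{12}S_{11}^{-1}S_{12}]^{-1}$.

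\textbf{Your route.} You use just-identification, which the paper only invokes later in the proof of Theorem~\ref{tLR}. This identifies $\widehat\thb_E$ exactly with the least-squares root, so $\widehat\lambdab=\zerob$ and $\widehat\gamma_E=k^{-1}\sum_j w_jY_j$ is an explicit linear statistic. After that you only need a scalar CLT with bounded deterministic weights. Your limit variance $c_2/(c_2-c_1^2)$ agrees with the paper's $\Sigma_{11}$.

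\textbf{What each approach buys.} Your approach gives transparency and firms up a point where the paper is thin. The paper passes from $\bar\gb=O_p(k^{-1/2})$ to $\norm{\widehat\thb_E-\thb_0}+\norm{\widehat\lambdab}=O_p(k^{-1/2})$ without a separate consistency argument, and its Taylor expansion needs one. For you, consistency and the rate are immediate. Your approach also shows the misspecification bias explicitly as $\sqrt k\,b_{n,k}\int_0^1 w(u)u^{-\rho}\,du$. This makes clear why $\delta>0$ is needed: with only $\sqrt k\,b_{n,k}=O(1)$, a nonzero asymptotic bias would remain unless $\rho=\rho_c$. The paper's approach, in turn, extends to over-identified estimating equations. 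It also delivers the joint limit of $(\widehat\gamma_E,\widehat b_E)$ and the expansion that is reused for Theorem~\ref{tLR}.

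\textbf{Corrections.}
\begin{enumerate}
\item Since $\rho_c<0$, $x_j=(j/(k+1))^{|\rho_c|}\in(0,1]$ is \emph{small} for small $j$, not large, and the same holds for $(j/(k+1))^{-\rho}$. So all weights are trivially bounded.
\item You do not need to rederive the remainder through Potter bounds. You can cite Theorem~2.1 of \cite{Beirlant1999}, as the paper does. The extra $\log k$ from $\sum_j|\beta_j/j|=o_p(b_{n,k}\log k)$ is then absorbed by $\delta>0$, since $\sqrt k\,b_{n,k}\log k\to0$.
\item The identification of the MELE with the least-squares estimator is deterministic for $k\ge2$, at least when $\gamma$ is unrestricted. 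The normal equations have a unique root, $p_j=1/k$ is feasible there, and $\zerob=k^{-1}\sum_j\gb_j$ lies in the convex hull automatically. No argument about the $Y_j$ lying on both sides of the fitted line is needed.
\end{enumerate}
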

The proof is given in Appendix~\ref{sA1}.

\begin{rmk}
Note that by replacing $\r$ in the objective function $S(\thb)$ with a canonical value $\r_c<0$, rather than a consistent estimator of $\r$, we have to impose a slightly stronger condition on the rate of decay of $b_{n,k}$:  $k^{1/2 + \delta} b_{n,k} = O(1)$ compared to $\sqrt{k} b_{n,k} = O(1)$ in \cite{Beirlant2002} (Theorem~3.2). However, there are potential gains in the asymptotic variance when $\r_c<\r$.
\end{rmk}

Theorem \ref{asymptotic.normality} serves as a basis for establishing the limiting distribution of the EL ratio statistic, which can then be used to construct confidence intervals for the tail index parameter. Unlike the bias-corrected parametric method, the empirical likelihood does not require explicitly estimating the asymptotic variance, and
one of the advantages of the empirical likelihood approach is the data-driven shape of the resulting confidence intervals (and regions).

The empirical log-likelihood ratio function for tail index $\g$ is given by
\begin{equation}
R(\g) := -2 \left[\max_{b} \ell_E(\g,b) - \max_{\thb \in (0,\infty) \times \mathbb{R}}\ell_E(\thb)\right],
\end{equation}
where $\ell_E$ is the empirical log-likelihood function in~\eqref{optimization}. We state the main result in the following theorem giving the asymptotic behaviour of $R(\g_0)$, where $\g_0$ is the true value of the tail index.

\begin{theorem}\label{tLR}
Under the same assumptions as in Theorem~\ref{asymptotic.normality} with the true value of tail index parameter equal to $\g_0>0$,
\[R(\gamma_0)  \tod \chi_1^2,\qquad n\to\nf.
\]
\end{theorem}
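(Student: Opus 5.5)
We follow the standard route of \cite{QinLawless1994} for profile empirical likelihood ratios with nuisance parameters, adapted to the triangular, non-identically distributed array $Y_1,\ldots,Y_k$ and to the slightly biased estimating functions $\gb_j$. Write $\thb_0=(\g_0,b_{n,k})$, $c_j=(j/(k+1))^{-\rho_c}$ and $\bm z_j=(1,c_j)^\top$, so that $\gb_j(Y_j;\thb)=(Y_j-\g-bc_j)\bm z_j$. The estimating functions are linear in $\thb$, so $\partial\gb_j/\partial\thb^\top=-\bm z_j\bm z_j^\top$ is nonrandom. We will need the deterministic limits
\[
\frac1k\sum_{j=1}^k \bm z_j\bm z_j^\top\to \Sigma_0:=\begin{pmatrix}1&\frac{1}{1-\rho_c}\\ \frac{1}{1-\rho_c}&\frac{1}{1-2\rho_c}\end{pmatrix},
\qquad
\frac1k\sum_{j=1}^k \gb_j\gb_j^\top \overset{P}{\to} \g_0^2\,\Sigma_0,
\]
both obtained by Riemann sums, using that $Y_j$ is approximately exponential with mean $\g_0+O(b_{n,k})$ and variance approximately $\g_0^2$. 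We also take from the proof of Theorem~\ref{asymptotic.normality} the basic expansion
\[
k^{-1/2}\sum_{j=1}^k\gb_j(Y_j;\thb_0)\tod \mathcal N(\bzero,\g_0^2\Sigma_0),
\]
together with $\widehat\thb_E-\thb_0=O_P(k^{-1/2})$. The bias term in this expansion is $k^{-1/2}\sum_j (\ebb Y_j-\g_0-b_{n,k}c_j)\bm z_j$. It comes from $\rho\neq\rho_c$ and from the remainder in Condition $(R_L)$. It is of order $\sqrt k\, b_{n,k}$ times a Riemann-sum constant, plus smaller terms, and it vanishes by the assumption $k^{1/2+\delta}b_{n,k}=O(1)$. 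Showing this is exactly what that assumption is for, and it is already handled in Theorem~\ref{asymptotic.normality}.

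\emph{Step 1 (Lagrange multiplier).} For $\thb$ in a ball $\|\thb-\thb_0\|\le Mk^{-1/2}$ we show $\lambdab(\thb)=O_P(k^{-1/2})$ uniformly, following Owen's argument. This needs $\max_j\|\gb_j\|=o_P(k^{1/2})$. That bound holds because the $Y_j$ are approximately exponential, so $\max_j Y_j=O_P(\log k)$, and $c_j\le (k+1)^{-\rho_c}$ grows at most polynomially. If $-\rho_c\ge 1/2$ the crude bound fails, and we instead bound $\max_j|Y_j-\g_0|\,c_j$ directly using the Rényi/Weissman representation. With this, expanding \eqref{Lagrange} gives
\[
\lambdab(\thb)=\Big(\sum_j\gb_j\gb_j^\top\Big)^{-1}\sum_j\gb_j(\thb)+o_P(k^{-1/2}),
\]
and
\[
-2\ell_E(\thb)-2k\log k=\Big(\sum_j\gb_j\Big)^{\!\top}\Big(\sum_j\gb_j\gb_j^\top\Big)^{-1}\Big(\sum_j\gb_j\Big)+o_P(1),
\]
uniformly on the ball.

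\emph{Step 2 (quadratic forms).} Put $\bm U_k=k^{-1/2}\sum_j\gb_j(\thb_0)$, $V=\g_0^2\Sigma_0$ and $D=-\Sigma_0$. By linearity,
\[
k^{-1/2}\sum_j\gb_j(\thb)=\bm U_k+D\sqrt k(\thb-\thb_0).
\]
Hence on the ball
\[
-2\ell_E(\thb)-2k\log k=Q(\bm t):=(\bm U_k+D\bm t)^\top V^{-1}(\bm U_k+D\bm t)+o_P(1),\qquad \bm t=\sqrt k(\thb-\thb_0).
\]
The unrestricted maximizer lies in the ball by Theorem~\ref{asymptotic.normality}. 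Since $D$ is square and invertible, $\min_{\bm t} Q=0$, so $2\max_\thb\ell_E+2k\log k=o_P(1)$. The restricted maximizer over $b$ at $\g=\g_0$ must also be localized to $|b-b_{n,k}|=O_P(k^{-1/2})$. We get this either by repeating the consistency argument for a one-dimensional parameter, or by a convexity argument: $-\ell_E$ is convex in $\thb$ here because $\gb_j$ is affine in $\thb$.

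\emph{Step 3 (the limit).} Minimizing $Q$ over $\bm t=(0,t_2)$ gives the standard result. Writing $\bm d_2$ for the second column of $D$, the minimum is $\bm U_k^\top P\bm U_k+o_P(1)$ with
\[
P=V^{-1}-V^{-1}\bm d_2(\bm d_2^\top V^{-1}\bm d_2)^{-1}\bm d_2^\top V^{-1}.
\]
Then $V^{1/2}PV^{1/2}$ is idempotent of rank $2-1=1$, and the limit law of $\bm U_k$ yields $R(\g_0)\tod\chi^2_1$.

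The main obstacle is Step 1. The summands form a triangular array that is only approximately independent exponential, with weights $c_j$ that are unbounded as $j/k\to0$. So the uniform $o_P(k^{1/2})$ bound on $\max_j\|\gb_j\|$, the law of large numbers for $k^{-1}\sum_j\gb_j\gb_j^\top$, and uniformity over the shrinking neighbourhood all have to be verified through the approximation \eqref{approximation}, with its error controlled. To do this we would first couple $Y_j$ with exact exponentials $\g_0E_j$ via the Rényi representation and bound the coupling error using Condition $(R_L)$ and $k^{1/2+\delta}b_{n,k}=O(1)$.
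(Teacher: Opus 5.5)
Your argument is correct in outline but follows a different route from the paper, and two points in it need correcting.

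\textbf{Comparison with the paper.} The paper follows Qin and Lawless. Since the system is just-identified, $\max_{\thb}\ell_E(\thb)=-k\log k$ exactly, attained with $p_j=1/k$ at the least-squares solution. The paper then writes the two first-order conditions at the restricted maximizer $(\g_0,\tilde b)$, namely the Lagrange equation and the $b$-score. It Taylor-expands the stacked map $P_k(b,\lambdab)$ at $(b_{n,k},\bzero)$ and solves the resulting linear system for $\tilde b-b_{n,k}$ and $\tilde{\lambdab}$. Finally it substitutes these into a second-order expansion of $R(\g_0)=2\sum_j\log(1+\tilde{\lambdab}^\top\gb_j)$.

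You instead prove a uniform quadratic approximation of $-2[\ell_E(\thb)+k\log k]$ on a $k^{-1/2}$-ball and minimize that quadratic over the slice $t_1=0$. Both routes end with the same rank-one projection: the paper's $S_{11}$ and $A_{12}$ are your $V$ and $\bm d_2$.

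What each route buys:
\begin{itemize}
\item Yours needs no stationarity at $\tilde b$.
\item Yours makes explicit three ingredients the paper passes over: $\max_j\norm{\gb_j}=o_P(\sqrt k)$, a law of large numbers for $k^{-1}\sum_j\gb_j\gb_j^\top$, and localization of $\tilde b$. The paper only asserts the last one ``similar to the proof of Theorem~\ref{asymptotic.normality}''.
\item The paper's route gives explicit expansions of $\tilde b$ and $\tilde{\lambdab}$.
\end{itemize}
For the unrestricted maximum you can use the exact identity above instead of invoking Theorem~\ref{asymptotic.normality}.

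\textbf{First correction: localization of $\tilde b$.} Your convexity route does not work: affinity of $\gb_j$ in $\thb$ does not imply that $-\ell_E$ is convex. The constraint $\sum_jp_j(Y_j-\g-bc_j)\bm z_j=\bzero$ is bilinear in $(p,\thb)$. In the mean case it reduces to a jointly linear constraint, which is why $\ell_E$ is concave there, but that does not carry over. The dual form $-\ell_E(\thb)-k\log k=\sup_{\lambdab}\sum_j\log(1+\lambdab^\top\gb_j(\thb))$ is a supremum of functions concave in $\thb$, which implies nothing.

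A direct argument works instead.
\begin{itemize}
\item By Jensen, $\ell_E(\thb)+k\log k\le-\sum_j\log(1+\lambdab^\top\gb_j(\thb))$ for every $\lambdab$ with all $1+\lambdab^\top\gb_j(\thb)>0$.
\item Set $\Delta=b-b_{n,k}$, $e_j=Y_j-\g_0-b_{n,k}c_j$ and $\lambdab=-\mathrm{sign}(\Delta)\,k^{-1/2}(1,0)^\top$. Then $\lambdab^\top\gb_j(\g_0,b)=k^{-1/2}(|\Delta|c_j-\mathrm{sign}(\Delta)e_j)$, which is increasing in $|\Delta|$ because $c_j>0$.
\item Evaluate at $|\Delta|=Mk^{-1/2}$ and use $\log(1+x)\ge x-x^2$ for $|x|\le1/2$. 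This gives $\sup_{|\Delta|\ge Mk^{-1/2}}[\ell_E(\g_0,b)+k\log k]\le-M\bar c+O_P(1)$. Here $\bar c=k^{-1}\sum_jc_j\to(1-\rho_c)^{-1}$, and the $O_P(1)$ term does not grow with $M$.
\item Since $\ell_E(\thb_0)+k\log k=O_P(1)$, every restricted maximizer satisfies $\tilde b-b_{n,k}=O_P(k^{-1/2})$.
\end{itemize}

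\textbf{Second correction: the weights are bounded.} Because $\rho_c<0$, $c_j=(j/(k+1))^{-\rho_c}\in(0,1)$. The weights vanish as $j/k\to0$; they do not blow up. Hence $\max_j\norm{\gb_j(\thb)}\le\sqrt2\max_j|Y_j-\g-bc_j|$. Your separate treatment of the case $-\rho_c\ge1/2$ is unnecessary.
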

See Apprendix~\ref{sA2} for the proof.

Theorem~\ref{tLR} can be used to construct a confidence interval for the tail index. We hence define a $100(1-\alpha)\%$ bias-corrected empirical likelihood (BCEL) confidence interval for $\gamma_0$ as
\begin{equation}
\label{bel.ci}
\mathcal{I}_{BCEL}(\a)=\{\gamma: R(\gamma) \leq \chi^2_{1,1-\alpha}\},
\end{equation}
where $\chi^2_{1,1-\alpha}$ is the $(1-\alpha)$th quantile of the $\chi^2_1$ distribution.

\begin{rmk}
In \cite{Lu2002}, the EL is defined based on $\bbE(Y_j) = \g.$ The EL function of $\g$ is well defined for $\g \in (\min_j Y_j, \max_j Y_j),$ so the corresponding EL confidence interval is range-respecting. After introducing a second order parameter $b_{n,k}$, when $n$ is small, the bias-corrected EL confidence intervals in~\eqref{bel.ci} may include negative values of $\g$. As $n$ becomes larger, $b_{n,k}$ tends to 0, and the range-respecting property of the EL confidence interval (region) will generally be retained.

%After introducing a second order parameter $b_{n,k}$, the EL ratio function $R(\g)$ may be well-defined for $\g < 0$ when $n$ is small, and
%the empirical likelihood confidence interval may lose its range-respecting property. As $n$ becomes larger, $b_{n,k}$ tends closer to 0, the range-respecting property of the EL confidence interval (region) is generally retained.
\end{rmk}

\section{Simulation study}\label{s.sim}

In this section, we conduct a simulation study to examine the finite sample performance of the proposed bias-corrected maximum empirical likelihood estimator across a range of probability distributions with different values of tail index $\gamma$ and second order parameter $\rho$. 

We consider the following two families of probability distributions to generate random samples:
\begin{itemize}
\item Student t distribution with $\nu$ degrees of freedom for $\nu\in\{2,3,4,5\}$. The tail index is $\g=1/\nu$ and the second order parameter is $\rho=-2/\nu$.
\item Burr distribution with shape parameters $\l$ and $\t$ and scale $\b=1$, whose survival function has the form:
$$1-F(x) = \left( \dfrac{1}{1+x^\t} \right)^\l,\qquad x>0,\qquad \l,\t>0. $$
We consider $\l\in\{1,4/3,2,3\}$ and set $\t=1/\l$. The tail index is $\g=1/(\t\l)$ and the second order parameter is $\rho=-1/\l$.
\end{itemize}

The choice of the parameter values above is made with the focus on cases where $-1\le\r<0$ for which the Hill estimator $\widehat\g_H=\widehat\g_H(k)$ is known to exhibit a bias over a range of $k$ values. This makes the Hill plot $\{(k,\widehat\g_H(k))\}$, a graphical tool for selecting the sample fraction $k$, difficult to interpret due to trends. \cite{Resnick2007} refers to these situations as Hill  horror plots. Figure~\ref{fig:HillHorrorPlots} illustrates a Hill plot for a random sample of size $n=500$ from a Student t distribution with $\nu=5$ degrees of freedom for which $\r=-2/5$. Without knowing the true value of tail index $\g$, it is difficult to choose a suitable range of $k$ values where the tail index estimates are stable.  

\begin{figure}[h]
    \centering
     \includegraphics[width=0.6\textwidth]{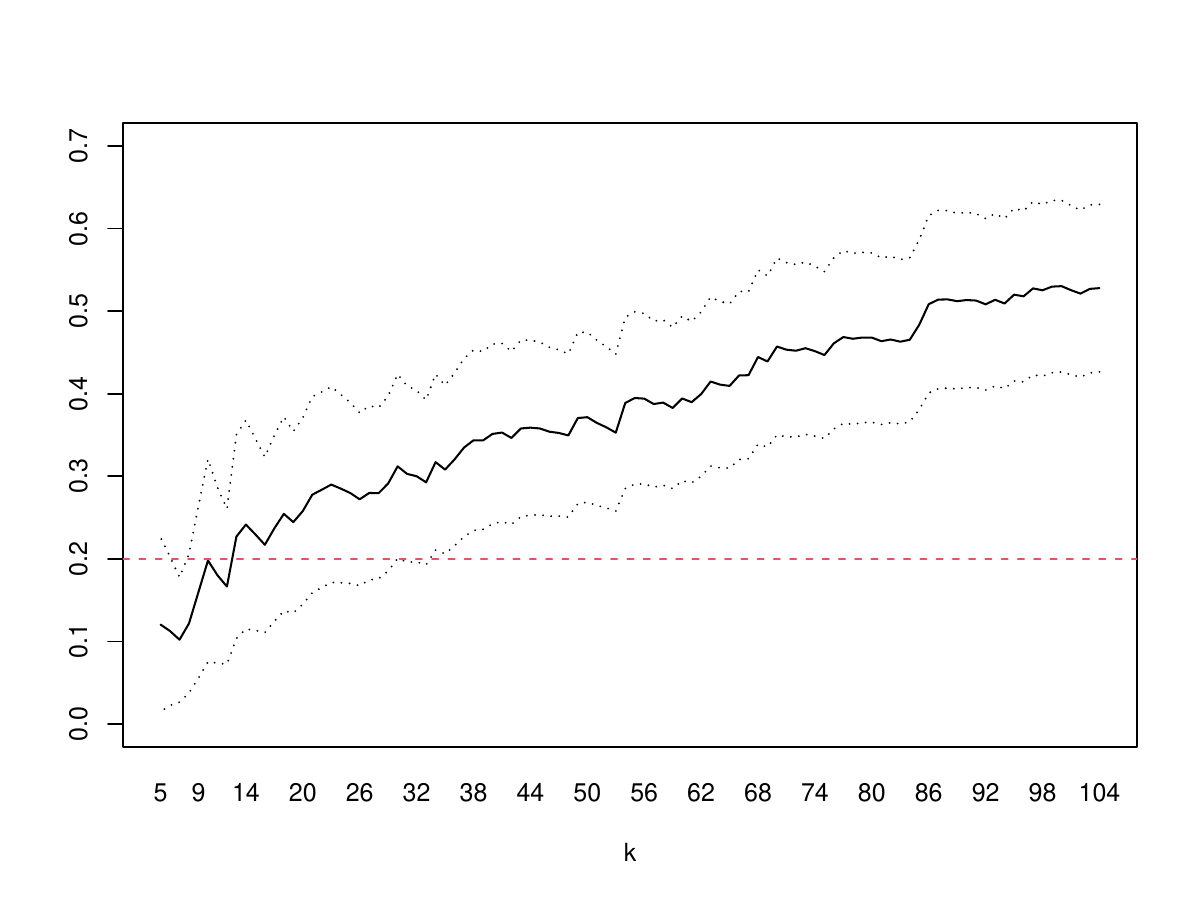}
    \caption{Hill plot for a random sample of size $n=500$ from a Student t distribution with degrees of freedom $\nu=5$. The tail index is $\gamma=0.2$ and second order parameter is $\rho=-0.4$. The dotted lines indicate approximate 95\% confidence intervals; the true value of $\g$ is indicated by the dashed red line.}
\label{fig:HillHorrorPlots}
\end{figure}

\subsection{Choice of the canonical value $\r_c$ in the bias-corrected MELE}
We begin by first examining the impact of the choice of the canonical value $\r_c$ in the proposed bias-corrected MELE of tail index~$\g$ in~\eqref{qMELE}. For each of the probability  distributions given above, we generate 1,000 random samples of size $n= 500$. We consider three values of $\r_c\in\{-2,-1,-0.5\}$. Figures~\ref{fig:rc.t} and~\ref{fig:rc.burr} summarize the behaviour of the empirical relative squared bias, variance and mean squared error (MSE) as a function of sample fraction~$k$ for the Student t and Burr distributions, respectively. As expected, based on the form of the asymptotic variance (see Theorem~\ref{asymptotic.normality}), the empirical variance is directly influenced to the value of $\r_c$ with smaller values of $\r_c$ leading to smaller variances. In our set-up, $\r_c=-2$ results in the smallest variance across the full range of considered $k$ values. For the Student t distributions (Figure~\ref{fig:rc.t}), in case~(a) when $\r=-1$, the bias is small and of similar magnitude for $\r_c=-2$ and $\r_c=-1$, but increases sharply for larger values of $k$ when $\r_c=-0.5$. The overall impact is that MSE is smallest when $\r_c=-2$, followed by $\r_c=-1$, driven by the ordering of variances. In case~(b) with $\r=-2/3$, $\r_c=-1$ results in the smallest bias and, for $k$ above 60, also leads to smaller MSE compared to $\r_c=-2$. In the remaining two cases with $\r=-1/2$ and $\r=-2/5$, we observe that while setting $\r_c=-0.5$ produces lowest bias for smaller values of $k$, due to variance differences, $\r_c=-1$ provides the most optimal compromise in terms of MSE. It is notable that when $\r_c=-1$ the squared bias remains quite stable over the entire range of considered $k$ values. 
For Burr distributions (Figure~\ref{fig:rc.burr}), when $\r\in\{-1,-3/4\}$, the bias is small for all three choices of $\r_c$ and ordering of MSE curves aligns with that of variances. However, for $\r=-1/2$ and especially $\r=-1/3$, we begin to see the impact of $\r_c$ on the bias which in turn affects the performance of the estimator from the MSE perspective. For the smallest value of $\r=-1/3$, setting $\r_c=-1$ offers the best gains in MSE. 

In Figures~\ref{fig:rcc.t} and~\ref{fig:rcc.burr}, we display the empirical coverage rates of 95\% confidence intervals based on~\eqref{bel.ci} across different values of $k$. The accuracy appears to be directly linked to the behaviour of bias. In particular, for the Student t distributions (Figure~\ref{fig:rcc.t}), setting $\r_c=-1$ leads to either comparable or better coverage rates relative to the other two choices of $\r_c$. We also see that when $\r_c=-1$ the coverage accuracy improves for larger values of $k$. In the case of Burr distributions (Figure~\ref{fig:rcc.burr}), coverage rates behave similarly across $\r\in\{-1,-3/4,-1/2\}$ cases with $\r_c=-2$ leading to slightly more accurate results for smaller values of $k$. In the right most panel for $\r=-1/3$ only $\r_c=-0.5$ setting maintains coverage rates close to the nominal level of 95\%.

While there appears to be no one single optimal choice of $\r_c$ across all simulation settings, the above results do support the choice of $\r_c=-1$ suggested by \cite{Feuerverger1999} as the one that provides a good balance between bias and variance. Furthermore, the relative performance tends to improve for moderate values of $k$. In terms of coverage precision, this choice works well for $\r\le -0.5$ settings, although the performance may deteriorate as $\r$ gets closer to zero as seen in the Burr examples. In practice, one can consider several $\r_c$ values to assess which choice leads to a more stable behaviour of the tail index estimates over a range of $k$ values. For example, as discussed above, when $\r$ is close to $-1$, there may be a benefit of taking $\r_c=-2$ due to both control of bias and variance.

%For each generated sample, we first  compute the bias-corrected MELEs with three different choices of canonical parameters $\rho_c= -1, -0.5, -2$. We then plot the mean of the 1000 bias-corrected MELEs with $\rho_c = -1,-0,5,-2$ in Figures \ref{Bias_t_rho} and \ref{Bias_Burr_rho}. We see that the choice of $\rho_c = -1$ leads to smallest bias for most selected distributions. The choice of $\rho_c = -0.5$ often leads to significant downward bias while the choice of $\rho_c = -2$ often leads to upward bias. This justifies the suggested choice of $\rho_c = -1$ in \cite{Feuerverger1999}. Figures \ref{coverage_t_rho} and \ref{coverage_Burr_rho} show the coverage rates of bias-corrected empirical likelihood confidence intervals with $\rho_c = -1, -0.5, -2$. It is seen that the choice of $\rho_c = -1$ leads to most accurate coverage rates.
%The above simulation results confirm that the canonical choice of $\rho_c = -1$ leads to the best numerical performance of the bias-corrected MELE among the three choices. 

\captionsetup[subfigure]{skip=2pt}

\begin{figure}[htbp]
\centering

\begin{subfigure}{0.83\textwidth}
    \centering
    \includegraphics[width=\linewidth]{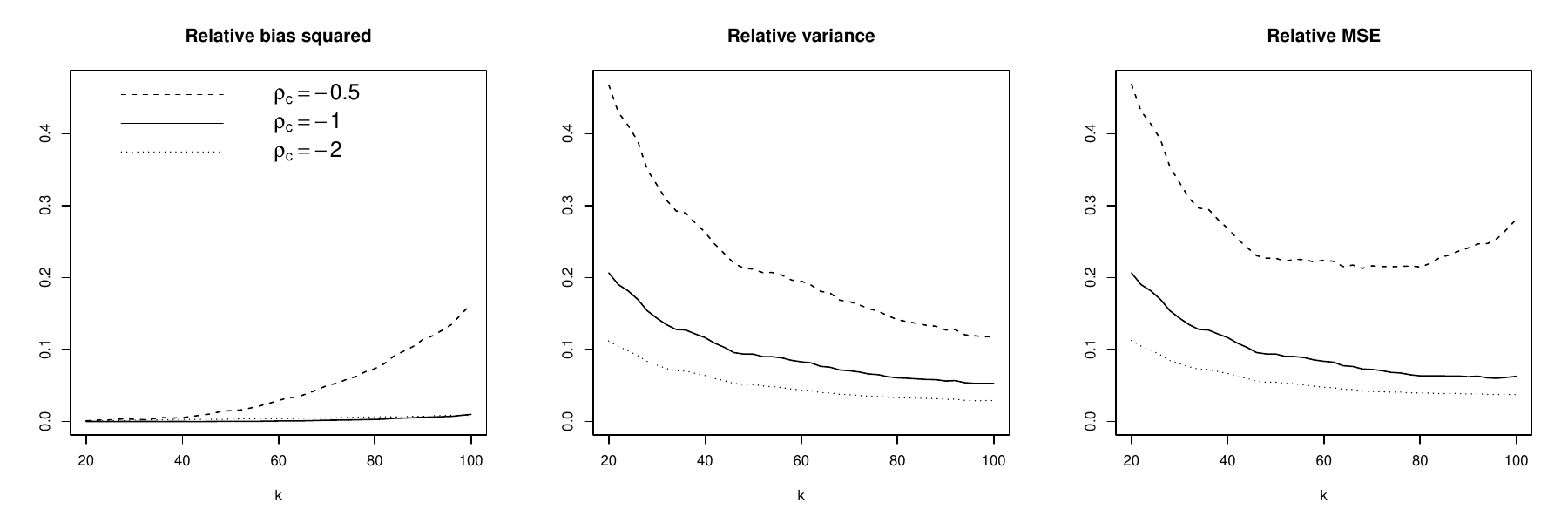}
    \caption{$\nu=2$, $\rho=-1$}
\end{subfigure}

%\vspace{0.5em}

\begin{subfigure}{0.83\textwidth}
    \centering
    \includegraphics[width=\linewidth]{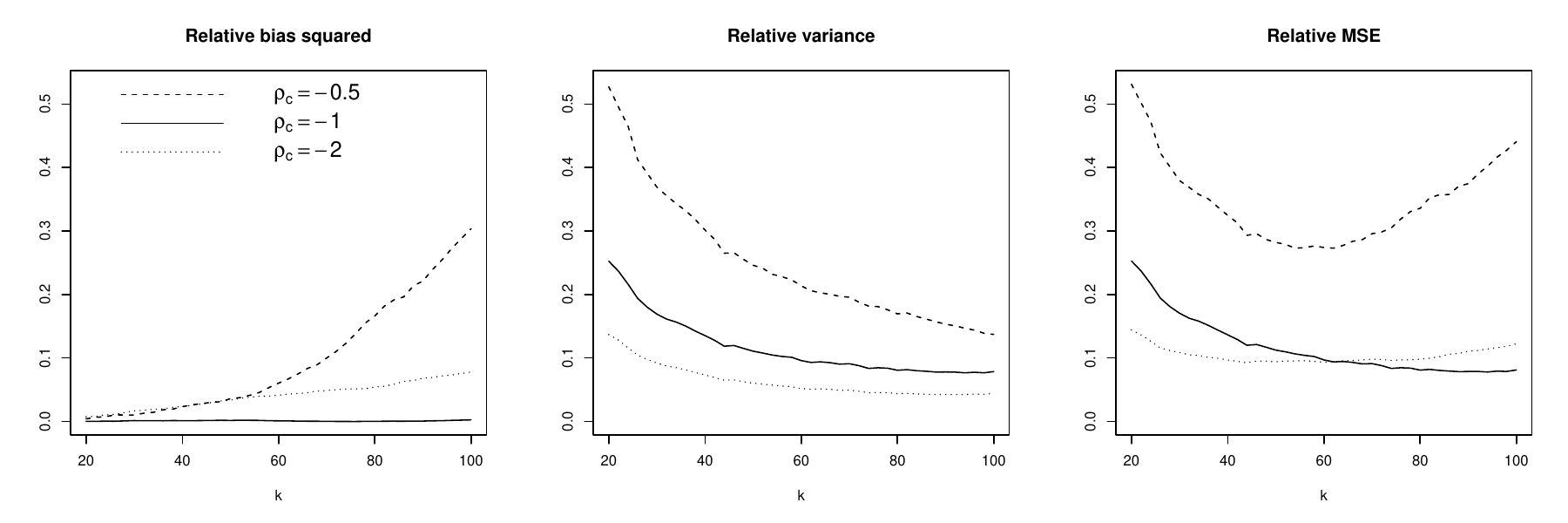}
    \caption{$\nu=3$, $\rho=-2/3$}
\end{subfigure}

%\vspace{0.5em}

\begin{subfigure}{0.83\textwidth}
    \centering
    \includegraphics[width=\linewidth]{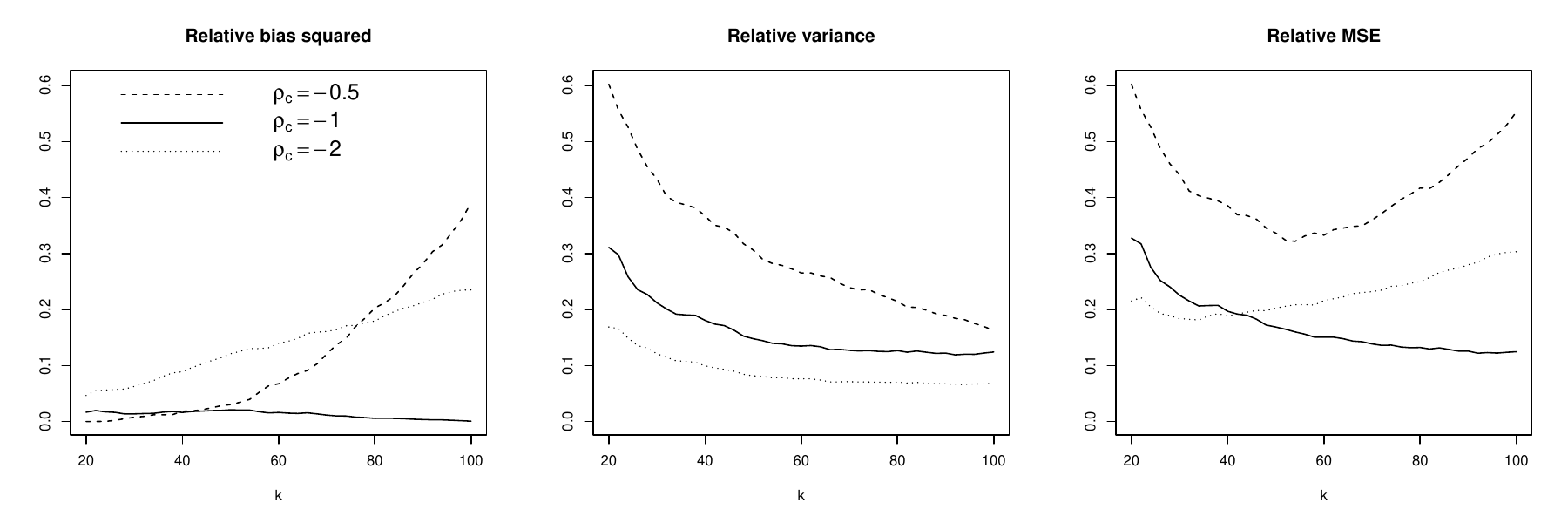}
    \caption{$\nu=4$, $\rho=-1/2$}
\end{subfigure}

%\vspace{0.5em}

\begin{subfigure}{0.83\textwidth}
    \centering
    \includegraphics[width=\linewidth]{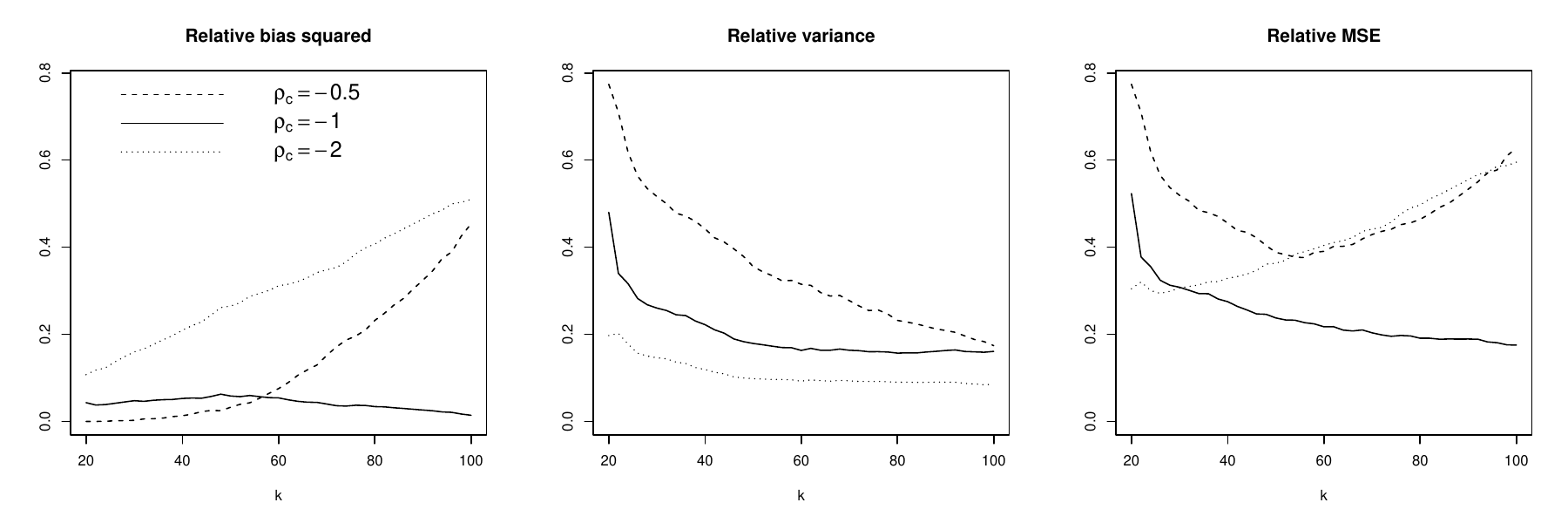}
    \caption{$\nu=5$, $\rho=-2/5$}
\end{subfigure}

\caption{Empirical relative bias squared, variance and MSE of the bias-corrected MELE of tail index $\gamma$ in~\eqref{qMELE} for three canonical values $\rho_c$ based on 1000 replications of random samples of size $n=500$ from Student t distributions with $\nu$ degrees of freedom, $\nu\in\{2,3,4,5\}$.}\label{fig:rc.t}
\end{figure}

\captionsetup[subfigure]{skip=2pt}

\begin{figure}[htbp]
\centering

\begin{subfigure}{0.83\textwidth}
    \centering
    \includegraphics[width=\linewidth]{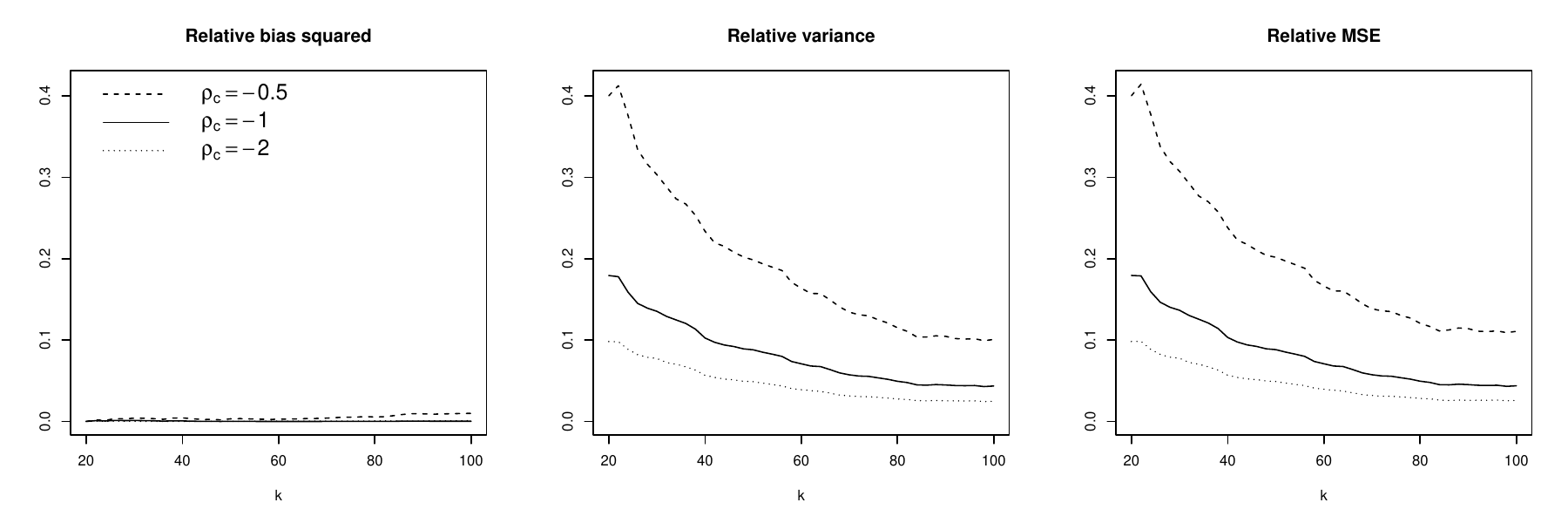}
    \caption{$\rho=-1$}
\end{subfigure}

%\vspace{0.5em}

\begin{subfigure}{0.83\textwidth}
    \centering
    \includegraphics[width=\linewidth]{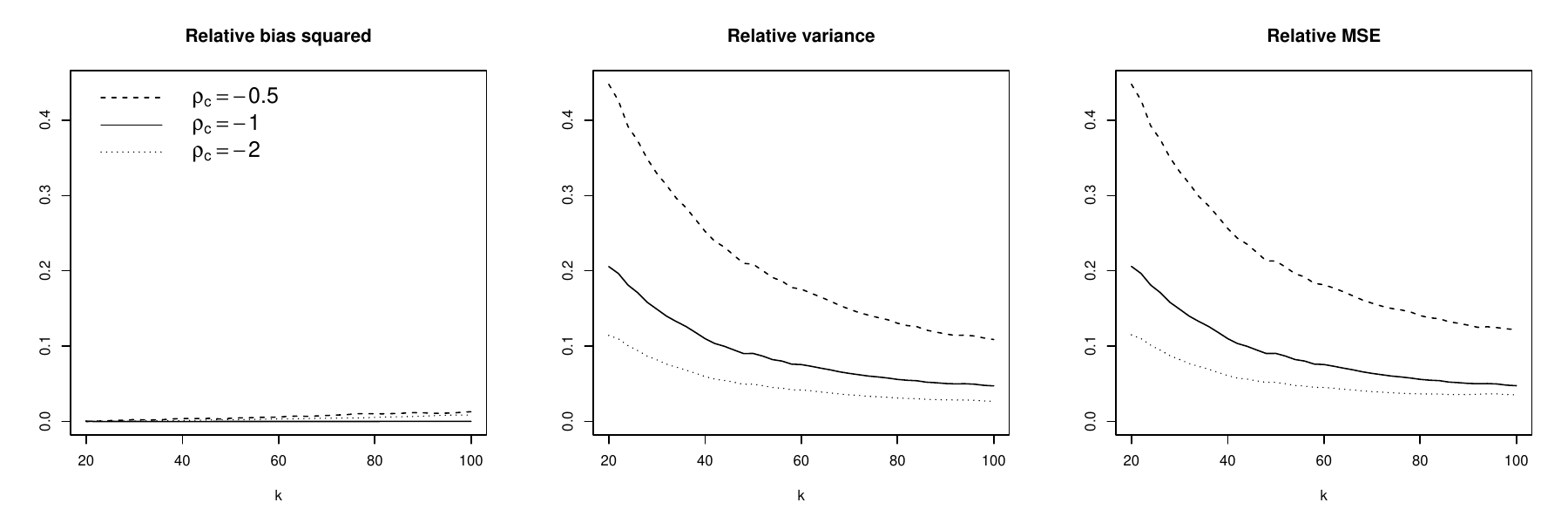}
    \caption{$\rho=-3/4$}
\end{subfigure}

%\vspace{0.5em}

\begin{subfigure}{0.83\textwidth}
    \centering
    \includegraphics[width=\linewidth]{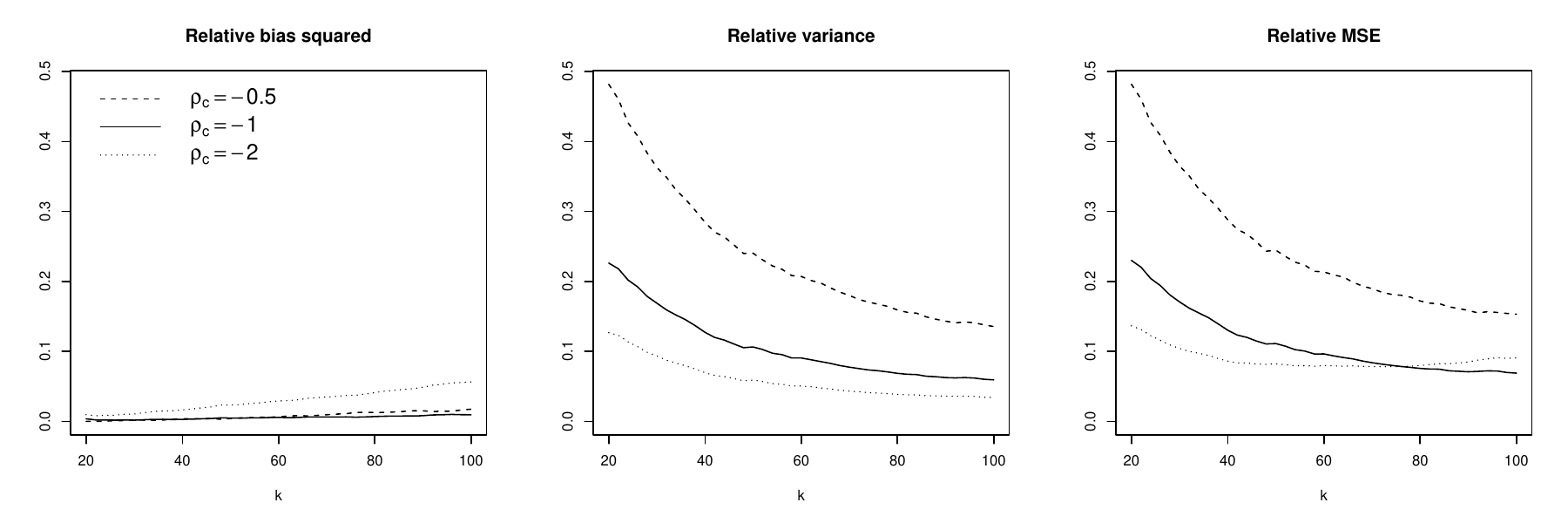}
    \caption{$\rho=-1/2$}
\end{subfigure}

%\vspace{0.5em}

\begin{subfigure}{0.83\textwidth}
    \centering
    \includegraphics[width=\linewidth]{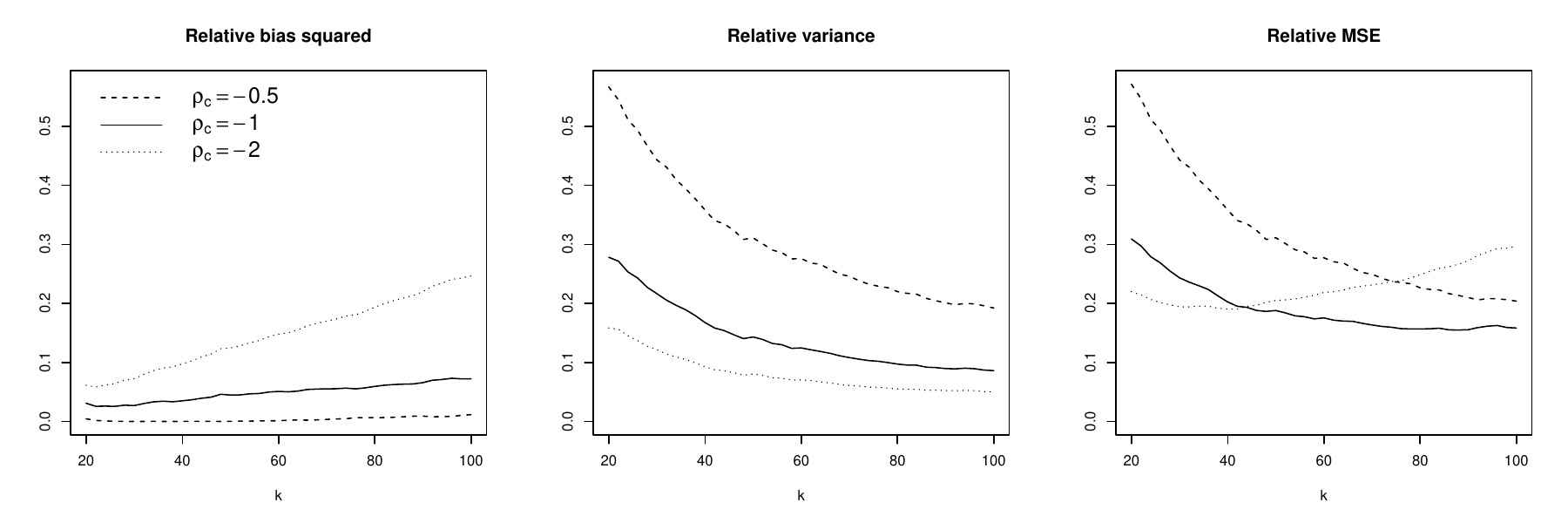}
    \caption{$\rho=-1/3$}
\end{subfigure}

\caption{Empirical relative bias squared, variance and MSE of the bias-corrected MELE of tail index $\gamma$ in~\eqref{qMELE} for three canonical values $\rho_c$ based on 1000 replications of random samples of size $n=500$ from Burr distributions with parameters $\tau=1/\l$ and $\l\in\{1,4/3,2,3\}$; second order parameter is $\rho=-1/\l$.}\label{fig:rc.burr}
\end{figure}

\begin{figure}[h]
    \centering
 \includegraphics[width=1\textwidth]{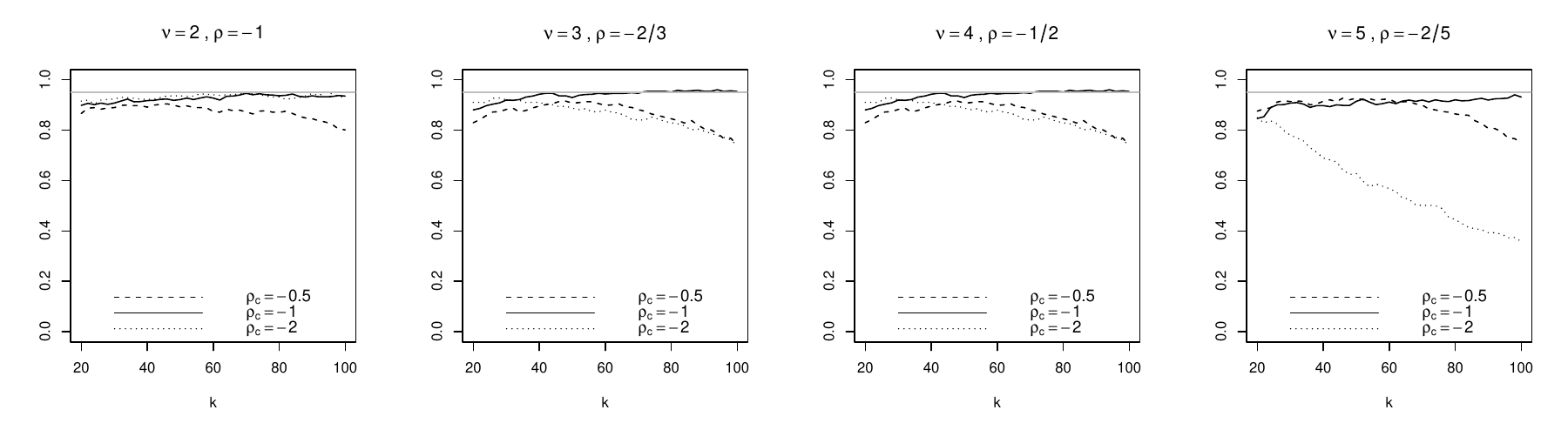}
         \caption{Empirical coverage probabilities for 95\% confidence intervals based on the bias-corrected MELE of tail index $\gamma$ in~\eqref{qMELE} for three canonical values $\rho_c$ based on 1000 replications of random samples of size $n=500$ from Student t distributions with $\nu$ degrees of freedom, $\nu\in\{2,3,4,5\}$.}
         \label{fig:rcc.t}
\end{figure}

\begin{figure}[h]
    \centering
 \includegraphics[width=1\textwidth]{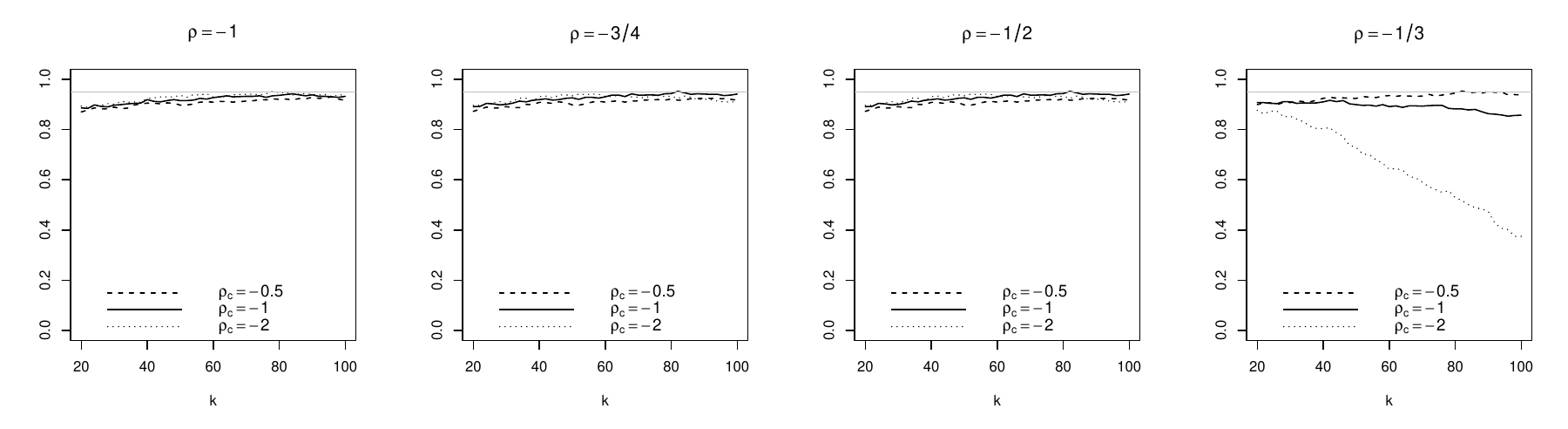}
         \caption{Empirical coverage probabilities for 95\% confidence intervals based on the bias-corrected MELE of tail index $\gamma$ in~\eqref{qMELE} for three canonical values $\rho_c$ based on 1000 replications of random samples of size $n=500$ from Burr distributions with parameters $\tau=1/\l$ and $\l\in\{1,4/3,2,3\}$; second order parameter is $\rho=-1/\l$.}
         \label{fig:rcc.burr}
\end{figure}

%\clearpage
\subsection{Comparison of tail index estimators and confidence interval constructions}

We next compare the proposed estimator of the tail index with its direct competitors: the Hill estimator in~\eqref{qhill} and the bias-corrected (parametric) MLE of \cite{Beirlant1999}. Based on the asymptotic behaviour of these three estimators, the Hill estimator has the smallest (asymptotic) variance and this is confirmed by the middle panel plots in Figures~\ref{fig:est.t} and~\ref{fig:est.burr}. Due to the choice of $\r$ values in the simulation settings, as expected, the Hill estimator shows bias in finite sample situations, which can be substantial for larger values of $\r$. The bias-corrected MLE exhibits the largest empirical variance\footnote{Since estimation of $\rho$ is inconsistent, \cite{Beirlant1999} set an upper bound of -0.5 in estimation of $\rho$ to stabilize estimation of the tail index. The variance of bias-corrected MLE depends on this upper bound.}, which can be attributed to estimation of an extra parameter, $\r$, but it does lead to the lowest bias in most cases, particularly when $\r\ge-0.5$. In comparison, the proposed estimator seems to trade some of the bias (due to $\r$ misspecification) for variance reduction which manifests in overall lowest MSE over moderate values of $k$ when $-1<\r\le -0.5$ and outperforms other method for $\r> -0.5$.

\captionsetup[subfigure]{skip=2pt}

\begin{figure}[htbp]
\centering

\begin{subfigure}{0.83\textwidth}
    \centering
    \includegraphics[width=\linewidth]{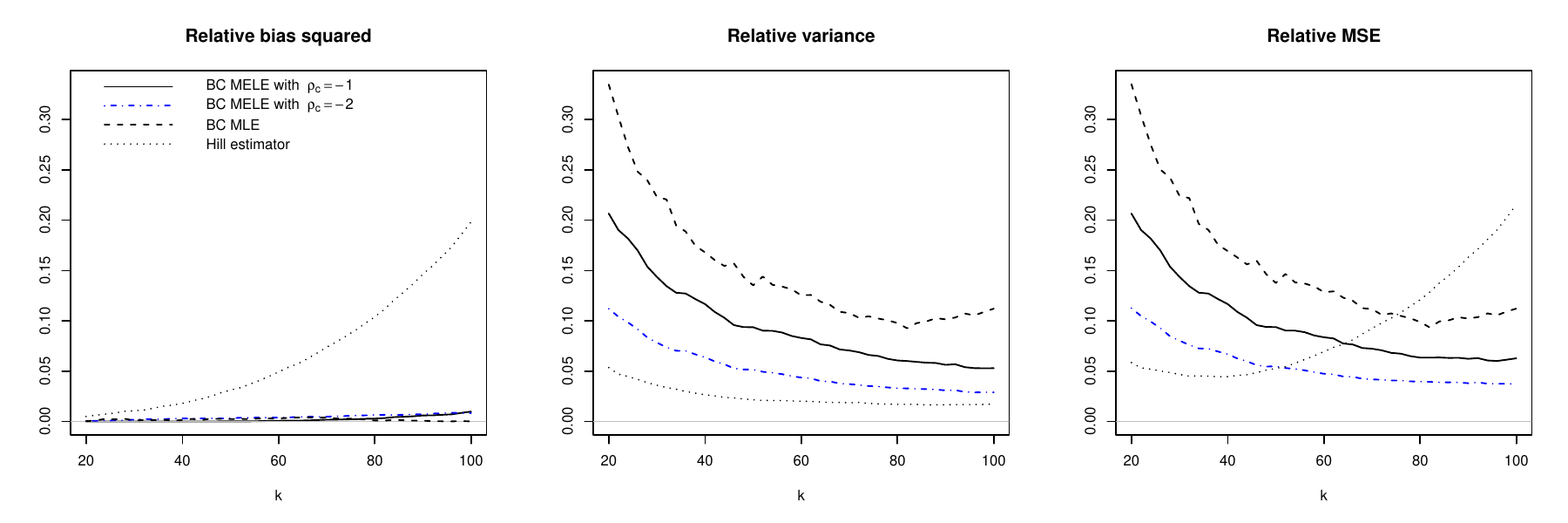}
    \caption{$\nu=2$, $\rho=-1$}
\end{subfigure}

%\vspace{0.5em}

\begin{subfigure}{0.83\textwidth}
    \centering
    \includegraphics[width=\linewidth]{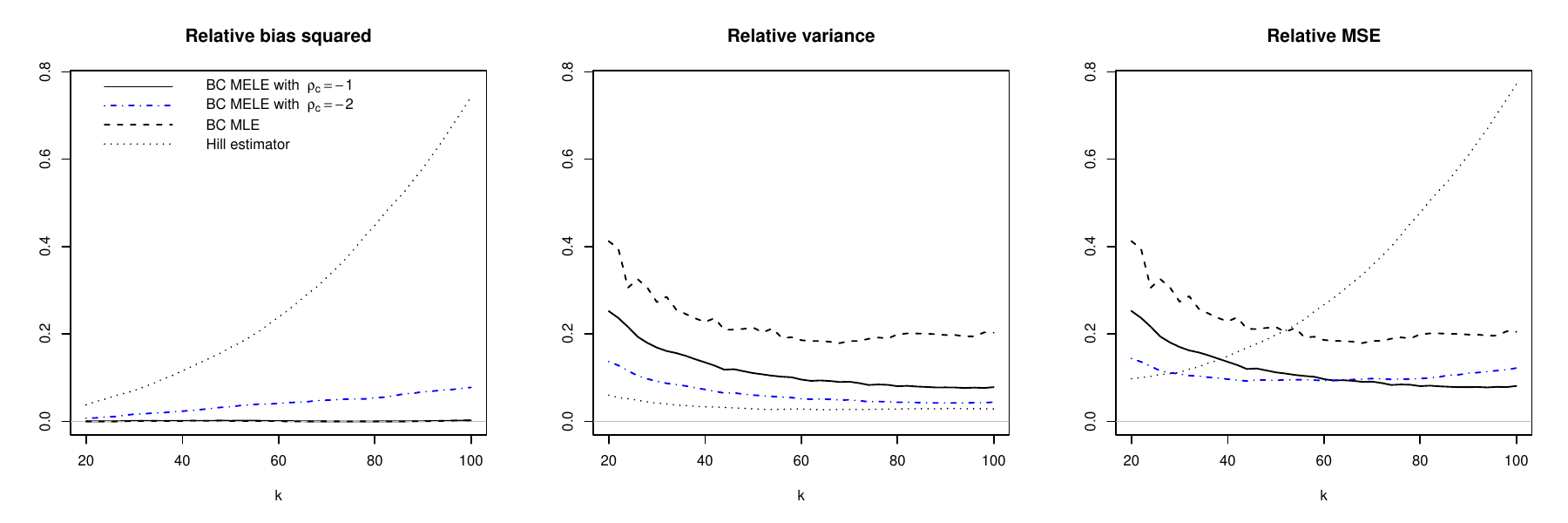}
    \caption{$\nu=3$, $\rho=-2/3$}
\end{subfigure}

%\vspace{0.5em}

\begin{subfigure}{0.83\textwidth}
    \centering
    \includegraphics[width=\linewidth]{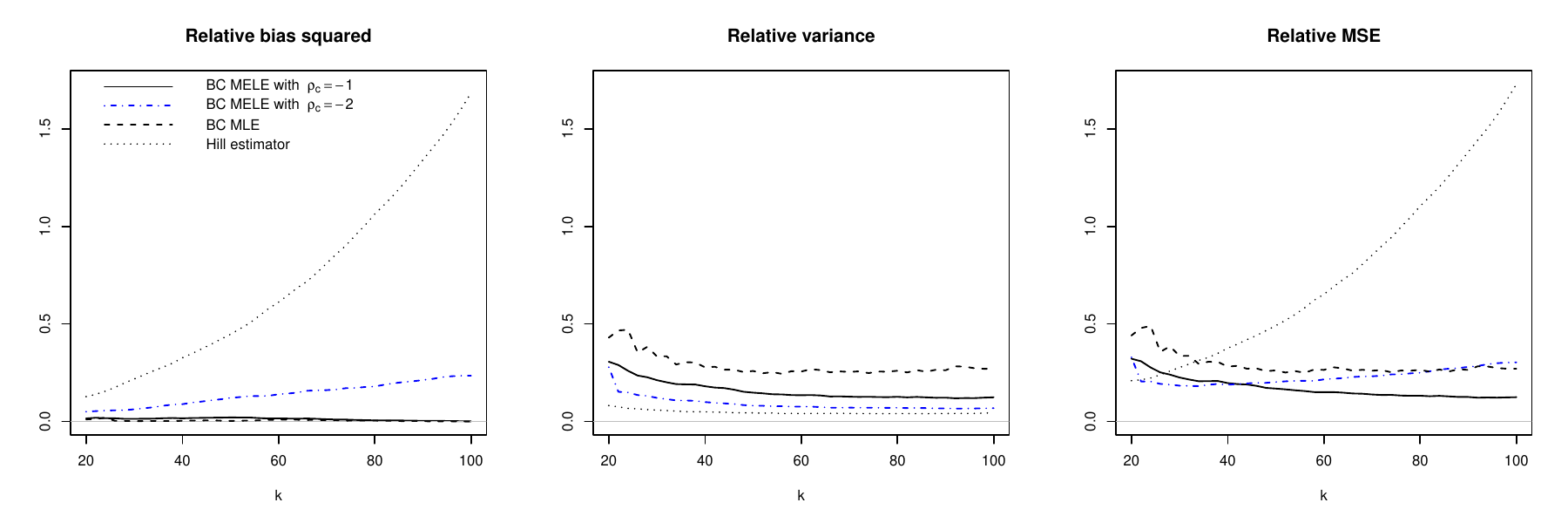}
    \caption{$\nu=4$, $\rho=-1/2$}
\end{subfigure}

%\vspace{0.5em}

\begin{subfigure}{0.83\textwidth}
    \centering
    \includegraphics[width=\linewidth]{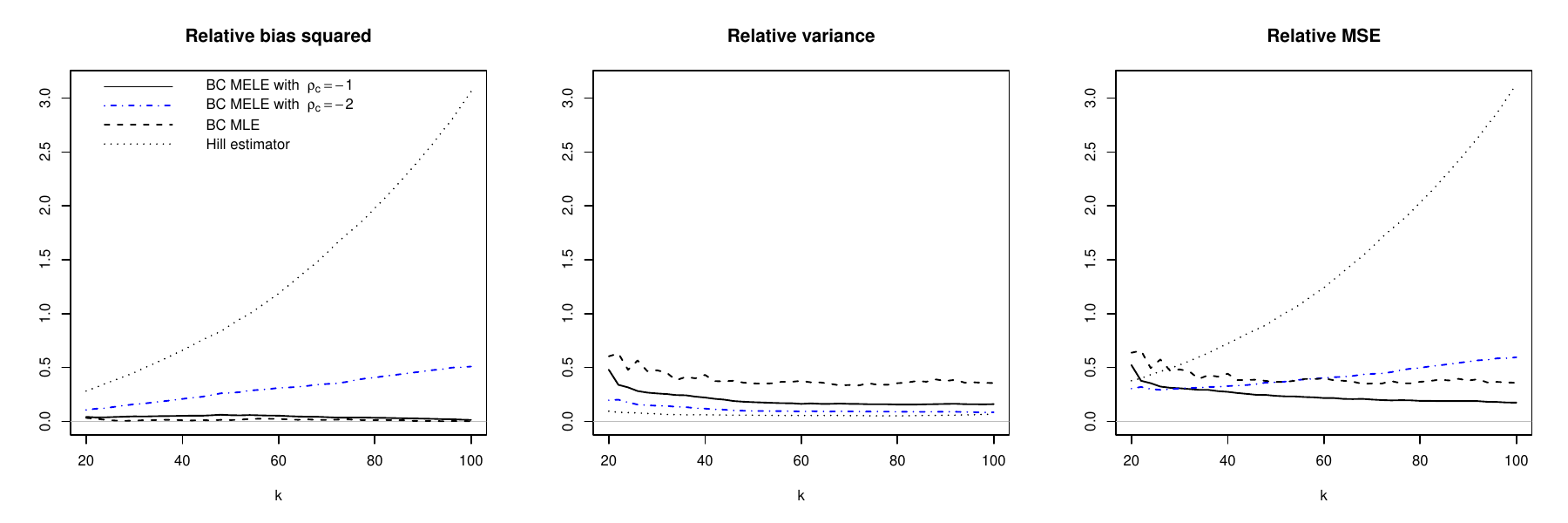}
    \caption{$\nu=5$, $\rho=-2/5$}
\end{subfigure}

\caption{Empirical relative bias squared, variance and MSE of the bias-corrected MELE of tail index $\gamma$ in~\eqref{qMELE} for two canonical values $\rho_c$, the bias-corrected MLE and the Hill estimator based on 1000 replications of random samples of size $n=500$ from Student t distributions with $\nu$ degrees of freedom, $\nu\in\{2,3,4,5\}$.}\label{fig:est.t}
\end{figure}

\captionsetup[subfigure]{skip=2pt}

\begin{figure}[htbp]
\centering

\begin{subfigure}{0.83\textwidth}
    \centering
    \includegraphics[width=\linewidth]{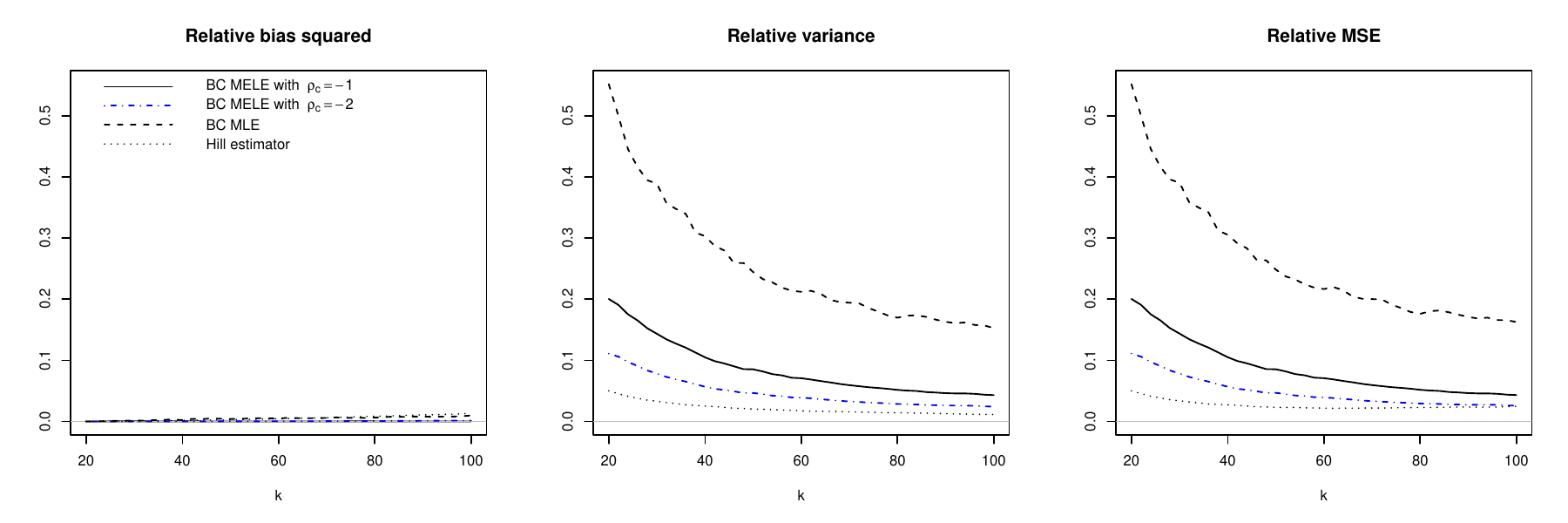}
    \caption{$\rho=-1$}
\end{subfigure}

%\vspace{0.5em}

\begin{subfigure}{0.83\textwidth}
    \centering
    \includegraphics[width=\linewidth]{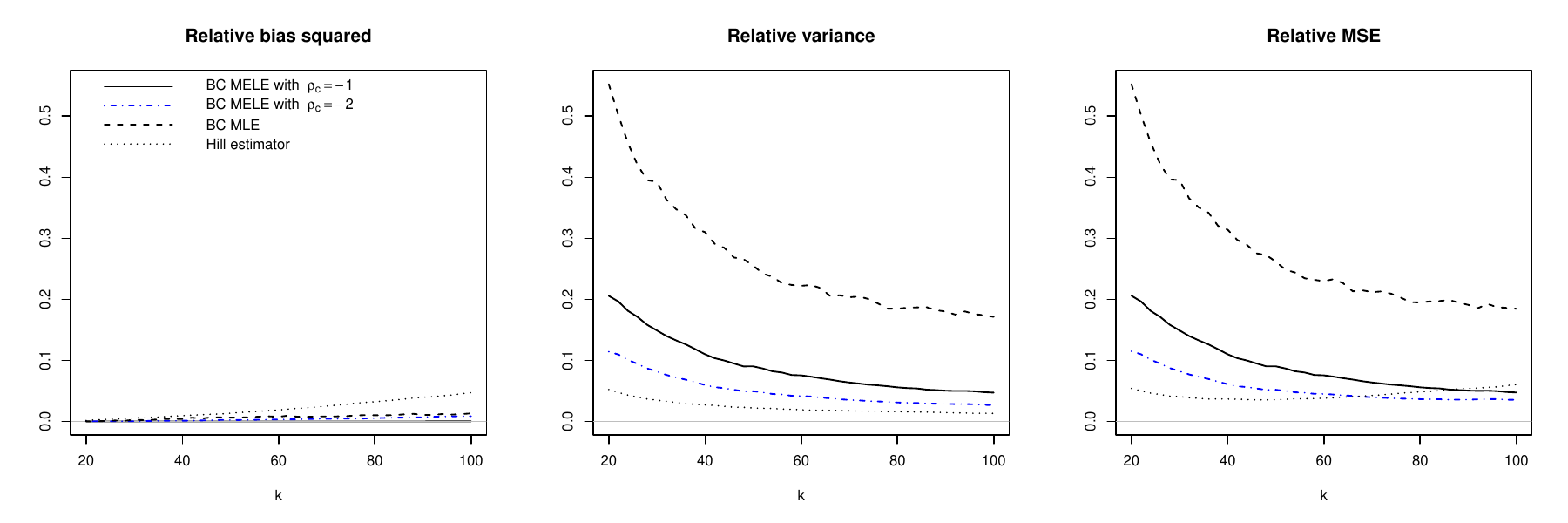}
    \caption{$\rho=-3/4$}
\end{subfigure}

%\vspace{0.5em}

\begin{subfigure}{0.83\textwidth}
    \centering
    \includegraphics[width=\linewidth]{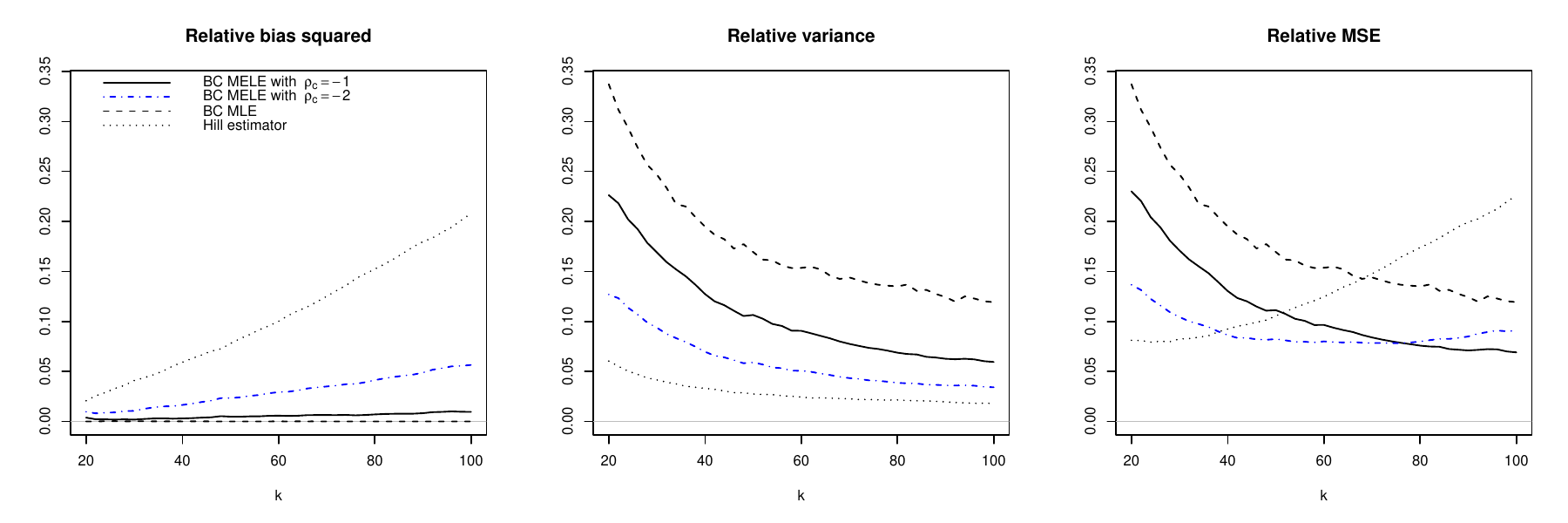}
    \caption{$\rho=-1/2$}
\end{subfigure}

%\vspace{0.5em}

\begin{subfigure}{0.83\textwidth}
    \centering
    \includegraphics[width=\linewidth]{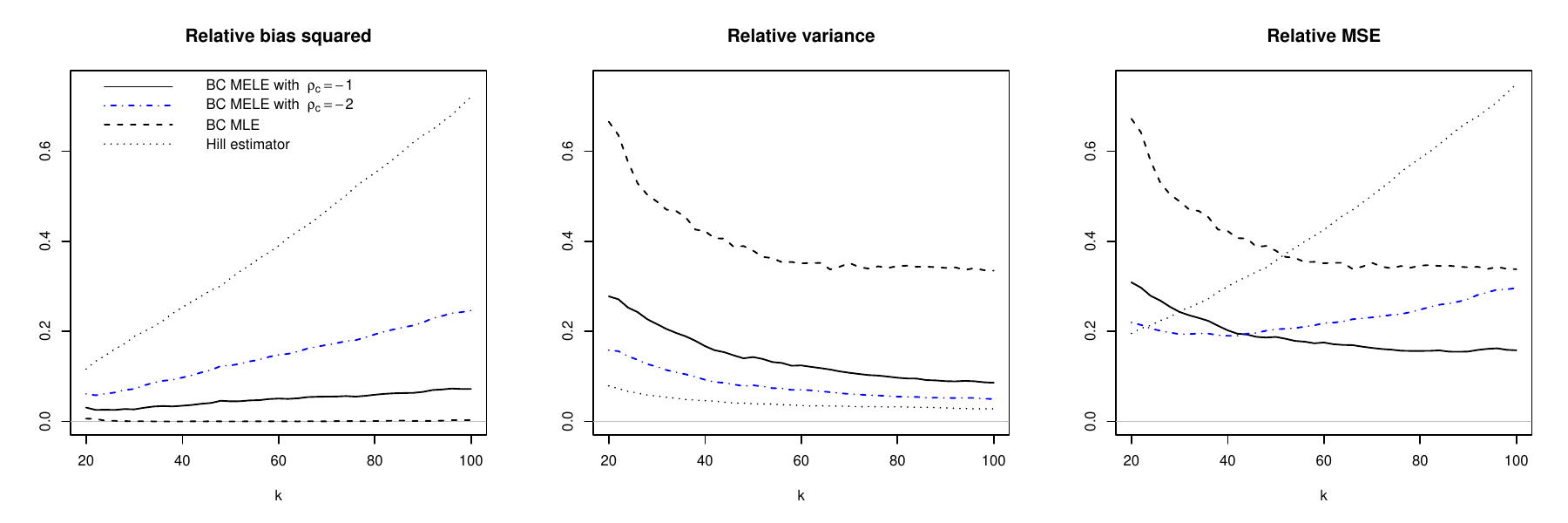}
    \caption{$\rho=-1/3$}
\end{subfigure}

\caption{Empirical relative bias squared, variance and MSE of the bias-corrected MELE of tail index $\gamma$ in~\eqref{qMELE} for two canonical values $\rho_c$, the bias-corrected MLE and the Hill estimator  based on 1000 replications of random samples of size $n=500$ from Burr distributions with parameters $\tau=1/\l$ and $\l\in\{1,4/3,2,3\}$; second order parameter is $\rho=-1/\l$.}\label{fig:est.burr}
\end{figure}

Finally, we examine the empirical coverage rates across the following four confidence interval constructions:
\begin{itemize}
\item the bias-corrected empirical likelihood confidence interval (BCEL), $\mathcal{I}_{BCEL}$ in \eqref{bel.ci} with $\r_c=-1$;
\item the parametric bias-corrected confidence interval (BCMLE) of \cite{Beirlant1999} with upper bound of -0.5 for optimization of $\rho$; 
\item the empirical likelihood-based confidence interval (EL),  $\mathcal{I}_{EL}$ in \eqref{ci.el};
\item the confidence interval based on asymptotic normality of the Hill estimator (Hill), $\mathcal{I}_{H}$ in \eqref{ci.hill}.
%\item The adjusted empirical likelihood confidence interval (AEL) of \cite{Li2019} with the theoretical value $a_n=19/12$ that achieves second order coverage precision. 
\end{itemize}
In Figures~\ref{fig:cvg.t} and~\ref{fig:cvg.burr} we show the empirical coverage rates for 95\% confidence intervals under the different constructions listed above. As already mentioned earlier, the coverage rates appear to be largely influenced by the bias of underlying tail index estimators. This explains substantial undercoverage for the Hill and EL confidence intervals when $\r$ is increasing away from $-1$ or $k$ moves towards a more moderate range. The estimation of the asymptotic variance of the tail index estimator also has an impact on the coverage rates. For the parametric bias-corrected MLE, lack of a consistent estimator of~$\rho$ is likely to lead to inaccurate estimation of the asymptotic variance. It is interesting to note that the parametric bias-corrected confidence interval often has coverage rates above the nominal level due to overestimation of the asymptotic variance, and coverage rates either above or below the nominal level for larger values of $k$. The proposed bias-corrected empirical likelihood confidence interval (with $\r_c=-1$) mostly maintains the nominal coverage for $-1\le\r\le-0.5$ and moderate values of~$k$, but does begin to show undercoverage when $\r>-0.5$ due to larger discrepancies between $\r$ and $\r_c$.

\begin{figure}[h]
    \centering
 \includegraphics[width=1\textwidth]{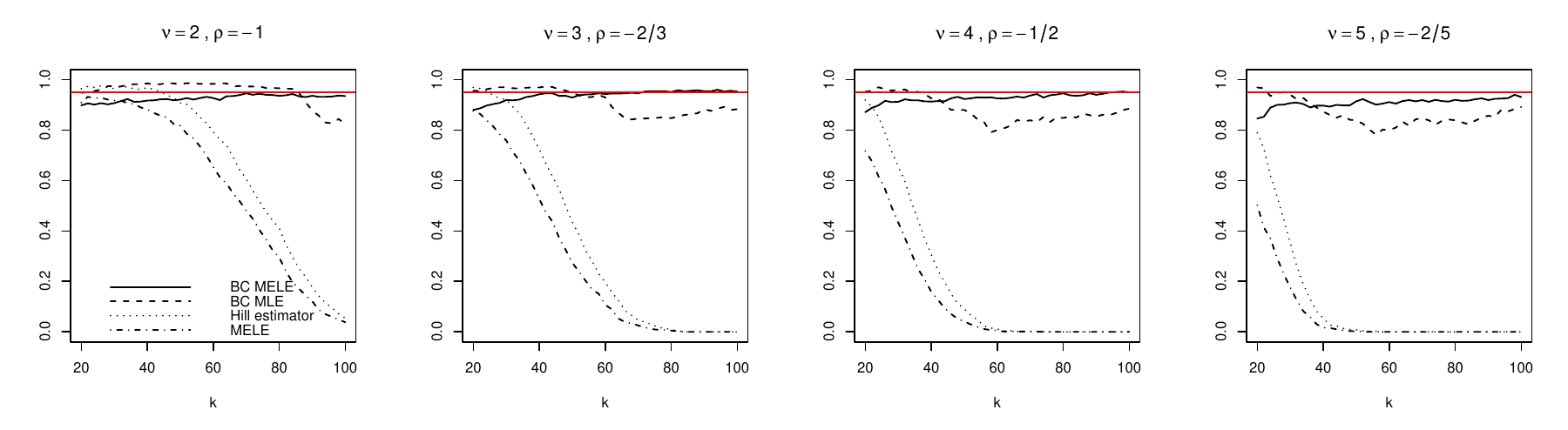}
         \caption{Empirical coverage probabilities for 95\% confidence intervals under different constructions including the bias-correct EL interval in~\eqref{bel.ci} with canonical value $\rho_c=-1$ (denoted "BC MELE") and  confidence intervals based on the bias-corrected MLE of \citet{Beirlant1999} (denoted "BC MLE"), the Hill estimator in \eqref{ci.hill} and the empirical likelihood in~\eqref{ci.el} (denoted "MELE"). The results are based on 1000 replications of random samples of size $n=500$ from Student t distributions with $\nu$ degrees of freedom, $\nu\in\{2,3,4,5\}$.}
         \label{fig:cvg.t}
\end{figure}

\begin{figure}[h]
    \centering
 \includegraphics[width=1\textwidth]{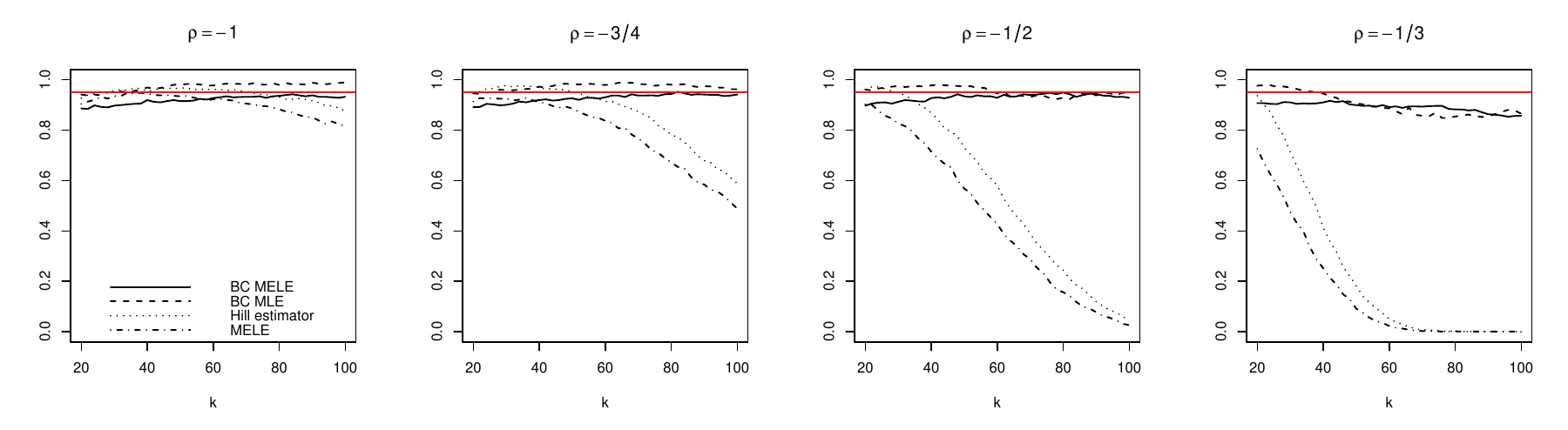}
         \caption{Empirical coverage probabilities for 95\% confidence intervals under different constructions including the bias-correct EL interval in~\eqref{bel.ci} with canonical value $\rho_c=-1$ (denoted "BC MELE") and  confidence intervals based on the bias-corrected MLE of \citet{Beirlant1999} (denoted "BC MLE"), the Hill estimator in \eqref{ci.hill} and the empirical likelihood in~\eqref{ci.el} (denoted "MELE"). The results are based on 1000 replications of random samples of size $n=500$ from Burr distributions with parameters $\tau=1/\l$ and $\l\in\{1,4/3,2,3\}$; second order parameter is $\rho=-1/\l$.}
         \label{fig:cvg.burr}
\end{figure}

\clearpage
\section{Application}\label{s.appl}
In this section, we illustrate the proposed estimator of the tail index on a real life dataset. We consider the Internet trace data downloaded from the Boston University's Computer Science Department in December 1994; the data can be accessed at the following link: \href{https://ita.ee.lbl.gov/html/contrib/BU-Web-Client.html}{https://ita.ee.lbl.gov/html/contrib/BU-Web-Client.html}.
 During the period of study, a total of 21,108 files were requested, and the size of each requested file (in bytes) was recorded. A large number of the files have zero size, and many files in the dataset contain duplicates. After removing empty and duplicate files, the dataset contains $n=2,785$ distinct files. Figure~\ref{fig:loglogsurv} shows a log-log survival plot with the ordered observations (on the horizontal axis) plotted against empirical estimates of their exceedance probability. Under the regular variation model in~\eqref{reg.var}, we expect the tail of the logarithm of the survival function to be approximately linear in $\log x$ with slope $-1/\g$. For this dataset, the tail does appear to taper off linearly, thus justifying the regular variation assumption.

\begin{figure}[h]
    \centering
     \includegraphics[width=0.5\textwidth]{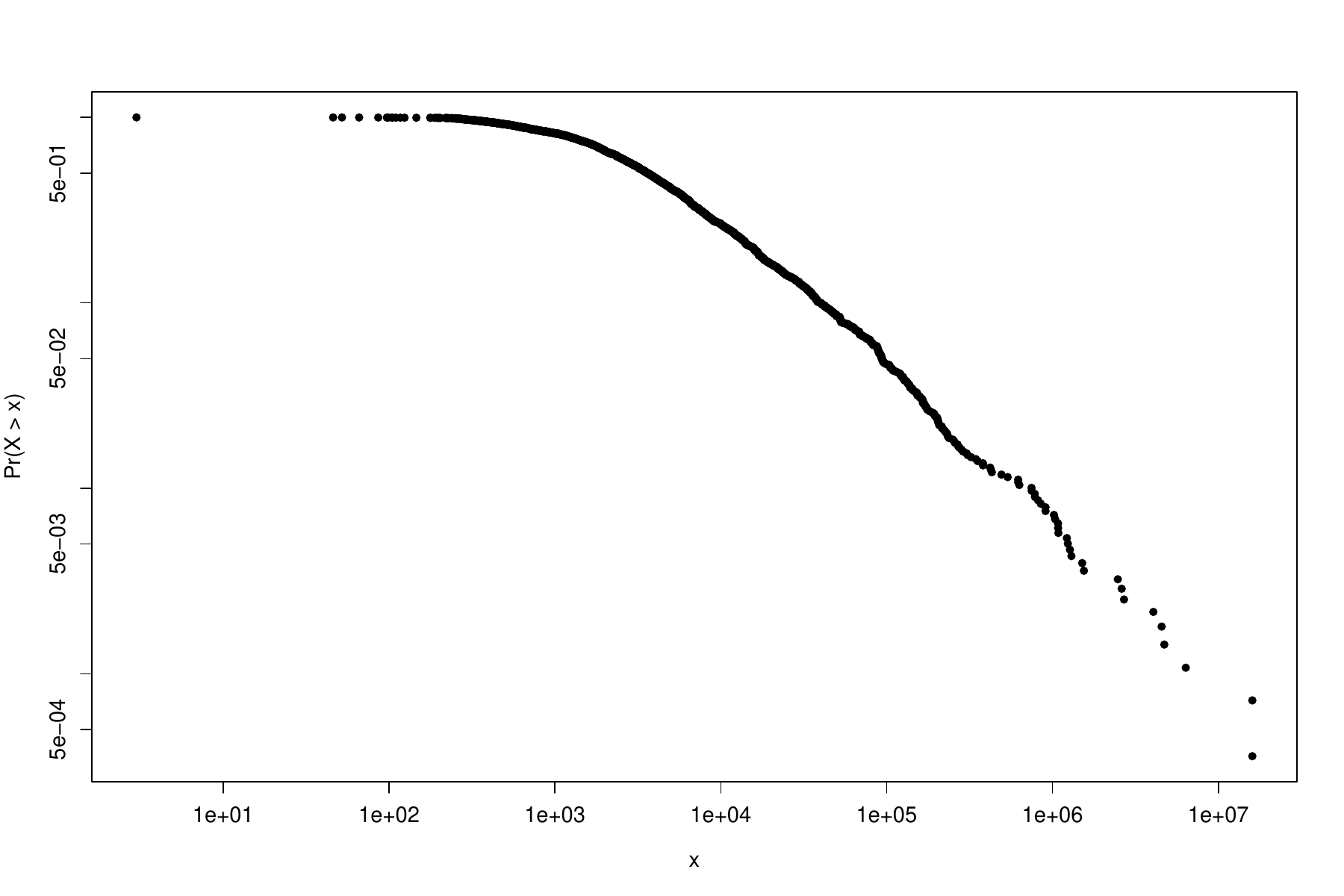}
    \caption{Log-log survival plot for the file size dataset based on 2,785 distinct files.}
\label{fig:loglogsurv}
\end{figure}

An important step in the application of many extreme value methods is deciding where the tail begins so an asymptotic model can be relied on to provide an accurate approximation to the tail of the underlying distribution. In our set-up, this requires the choice of sample fraction~$k$. For the classical Hill estimator, a standard approach to choosing $k$ is to explore tail index estimates over a range of values of $k$ and select a value of $k$ where the estimates have plateaued. A similar technique can be adopted to select~$k$ in the proposed bias-corrected MELE in~\eqref{qMELE}. Figure~\ref{fig:hillplot} illustrates the two plots based on the Hill estimator and bias-corrected MELE with $\r_c=-1$. Both plots exhibit a fluctuating behaviour for smaller values of~$k$. The Hill plot seems to settle only from about $k=470$ to $k=680$ (indicated by vertical dashed lines), with the average value of the estimates in this region giving $\widehat\g_H \approx 1.30$. However, these values of~$k$ are rather large relative to the sample size here, corresponding to the range from 17\% to 24\% of the sample, and thus are likely to lead to an estimate that is biased. The bias-corrected MELE plot has a stable region in a range $600 \le k\le 750$. This range of $k$ values is less of an issue for the bias-corrected estimator as it incorporates second-order terms in the asymptotic model, which allows to select larger values of $k$, a fact also supported by the simulation studies. Taking the average over estimates in the above region leads to $\widehat\g_E \approx 1.10$. 

To validate these estimates, we also consider the log-tail probability plot for the largest 10\% of the sample, or 278 observations in our case; see Figure~\ref{fig:diag}. Superimposed on this plot are lines obtained by fitting the intercept using least squares while keeping the slope fixed at $-1/\widehat\g$ for each of the two tail index estimates. On the basis of this plot, we find that the tail index estimate resulted from the proposed bias-corrected MELE provides a better fit to the tail observations, more accurately capturing the rate of decay compared to the Hill estimator based on the plateau region of the Hill plot. The residual sums of squares are 0.44 and 0.71 for the bias-corrected MELE and Hill estimator, respectively. As a final note on the comparison of the two estimators, we remark that while the Hill estimator leads to a more narrow confidence band (see Figure~\ref{fig:hillplot}), this band likely underestimates its sampling variability as was captured by the coverage rates in the simulation studies.

\begin{figure}[h]
    \centering
     \includegraphics[width=0.9\textwidth]{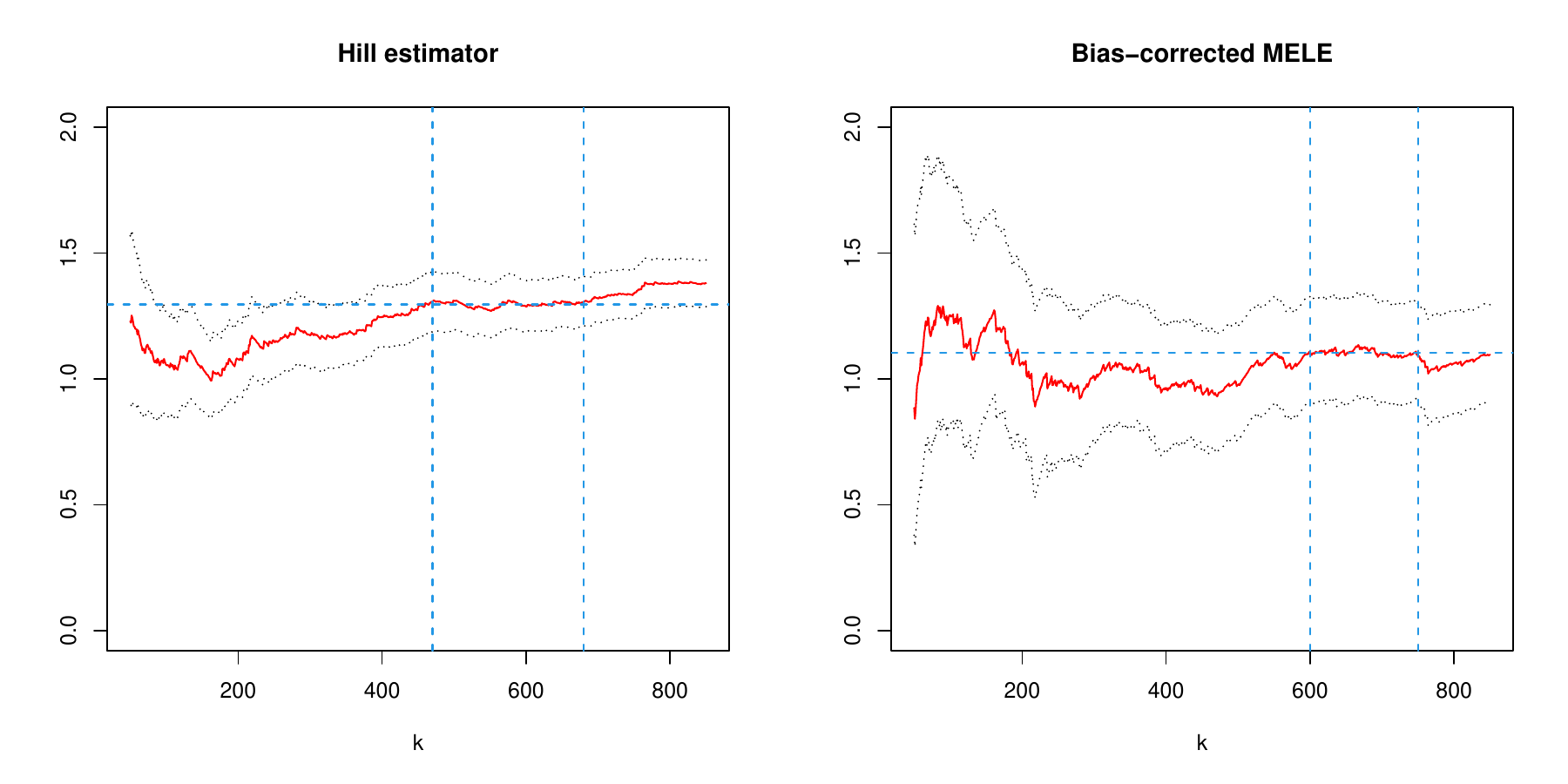}
    \caption{Plots of the tail index estimates (solid red curves) and corresponding 95\% confidence intervals (dotted curves) as a function of sample fraction~$k$ based on the Hill estimator in~\eqref{qhill} (left panel) and bias-corrected maximum empirical likelihood estimator in~\eqref{qMELE} for the file size dataset.}
\label{fig:hillplot}
\end{figure}

\begin{figure}[h]
    \centering
     \includegraphics[width=0.5\textwidth]{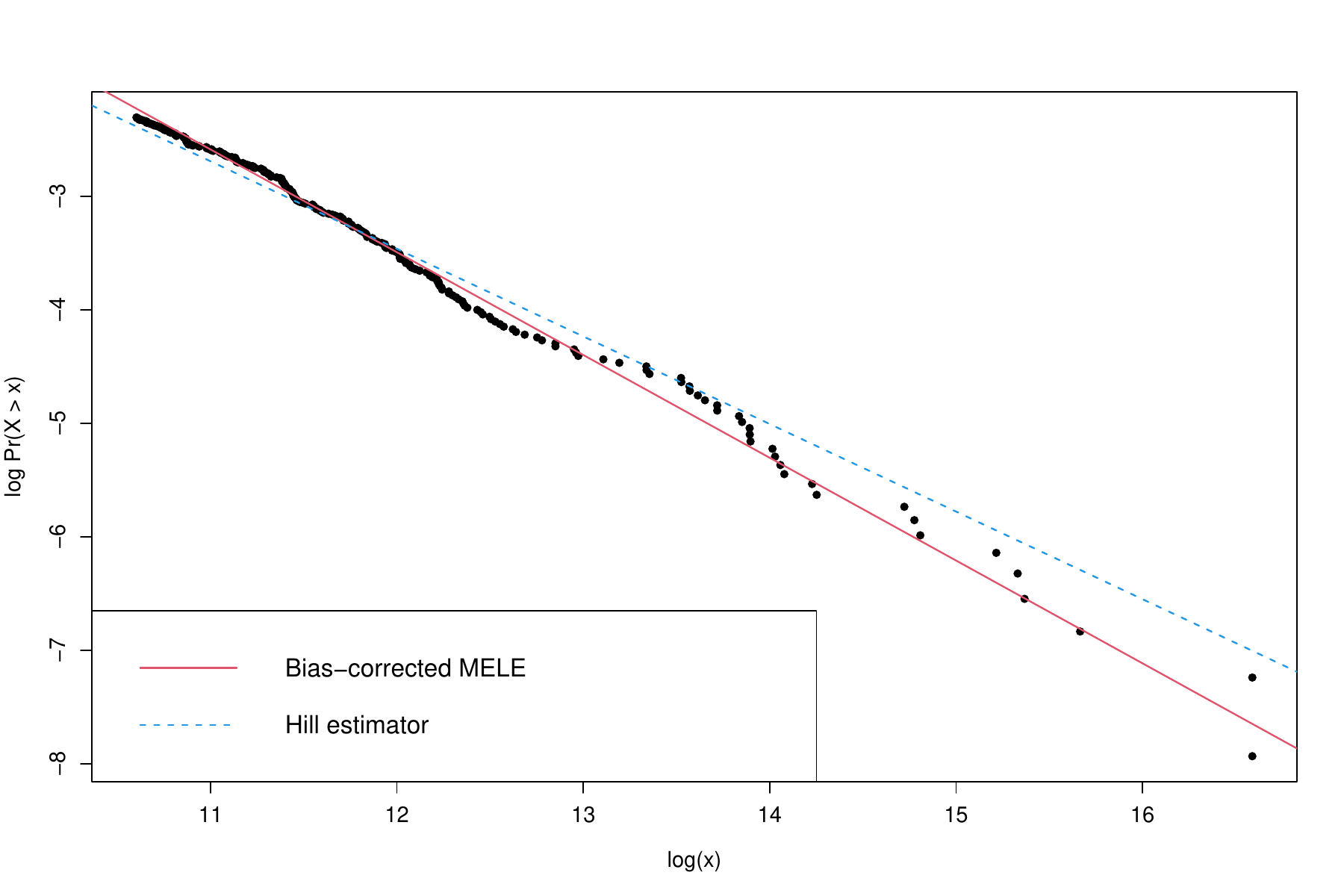}
    \caption{Log-tail probability plot based the largest 278 observations (10\% of the sample) in the file size dataset. The lines indicate least-squares fits to these tail observations with slopes held fixed at $-1/\widehat\g_H$ and $-1/\widehat\g_E$ for the Hill estimate and bias-corrected MELE, respectively.}
\label{fig:diag}
\end{figure}

\section{Conclusion}\label{s.concl}

The paper makes a contribution to the literature on tail index estimation for heavy-tailed data, an important yet often challenging problem arising in a wide range of applications.  We combine the idea of bias correction from \cite{Beirlant1999} with the empirical likelihood approach to propose a bias-corrected estimator for the tail index based on
the partial derivatives of the least squares objective function. Instead of jointly estimating the three parameters as in \cite{Beirlant1999}, we replace the second order parameter $\rho$ by a canonical choice $\rho_c$. This step is motivated by the fact that estimation of~$\rho$ through joint optimization of the parametric likelihood function is generally numerically unstable and inconsistent, often leading to inconsistent estimation of the asymptotic variance of the MLE of~$\g$ and, as a consequence, poor coverage rates of corresponding confidence intervals. 
The proposed bias-corrected estimator under suitable normalization has an asymptotically normal distribution with asymptotic variance depending on the choice of $\r_c$. The resulting bias-corrected EL ratio statistic has a $\chi^2$ limiting distribution, which guarantees accurate coverage when sample size~$n$ is large. The resulting bias-corrected EL confidence interval is data driven and does not require explicit estimation of the asymptotic variance.
While the use of a canonical value $\r_c$ may introduce a bias in finite samples when $\r$ and $\r_c$ are far apart, the asymptotic results as well as the simulation studies show that, when setting $\r_c = -1$, a reduction in variance often outweighs the effect of bias and results in lower MSE when comparing the proposed estimator with the bias-corrected parametric MLE. The proposed estimator also offers a competitive alternative to the Hill estimator in cases of slow convergence when $\r>-1$.

Theoretical considerations and simulation studies suggest $\r_c=-1$ as a preferred choice across most of the considered data generating processes both in terms of point estimation metrics and confidence interval coverage rates. The choice of sample fraction~$k$ plays an important role in the performance of tail index estimators discussed above. In contrast to the Hill estimator for which bias increases rapidly with $k$, the proposed bias-corrected estimator tends to work well across a wider range of $k$ values, thus allowing to select a larger value of $k$ for a reduced variance but without sacrificing much in terms of bias. Empirical coverage rates also tend to improve for moderate values of $k$.

As a final remark, we note that the proposed estimator continues to share some of the drawbacks of the Hill estimator, in particular, lack of location-invariance due to reliance on log-spacings. While the use of empirical likelihood generally leads to range-respecting confidence intervals, the proposed interval in~\eqref{bel.ci} does not have this property due to the form of the objective function which requires estimation of the rate function $b_{n,k}$. In practice, this would be an issue only in cases of lighter-tailed data with small values of $\g$.

\section*{Acknowledgements} The authors acknowledge financial support of the Natural Sciences and Engineering Research Council of Canada. 

%\clearpage
\bibliographystyle{apalike}
\bibliography{bias_corrected_el}

\appendix
\def\thesection{Appendix \Alph{section}}
\def\thesubsection{\Alph{section}.\arabic{subsection}}
\def\thetheorem{\Alph{section}.\arabic{theorem}}
\setcounter{equation}{0}
\renewcommand{\theequation}{A.\arabic{equation}}

\section{Proofs}

\subsection{Proof of Theorem~\ref{asymptotic.normality}}\label{sA1}

\begin{proof}
In the proof we will use $\thb_0=(\g_0,b_{n,k})$ and $\r_0$ to denote the true values of the model parameters.
Let 
\[Z_j = \left[\g_0 + b_{n,k} \Big(\frac{j}{k+1}\Big)^{-\r_0}\right] E_j,\qquad j=1,\ldots,k,
\]
where $E_1,\ldots,E_k$ are i.i.d. standard exponential random variables, and set
\[\sigma_j^2 := Var(Z_j) = \left\{\gamma_0 + b_{n,k}\left(\frac{j}{k+1}\right)^{-\rho_0}\right\}^2,\qquad j=1,\ldots,k.
\]
According to Theorem~2.1 of \cite{Beirlant1999}, there exist random variables $\beta_j$ such that
\[\max_{1\leq j \leq k}|Y_j - Z_j - \beta_j| = o_p(b_{n,k})
\]
with $\sum_{j=1}^k |\b_j / j| = o_p(b_{n,k}\log k).$

To determine the order of $\sum_{j=1}^k g_{1j}(Y_j;\thb_0)$, we write
\begin{align*}
\sum_{j=1}^k g_{1j}(Y_j;\thb_0) &= \sum_{j=1}^k \left[Y_j - \gamma_0 - b_{n,k}\left(\frac{j}{k+1}\right)^{-\rho_c}\right]
\\&= \sum_{j=1}^k \left[Z_j - \gamma_0 - b_{n,k}\left(\frac{j}{k+1}\right)^{-\rho_0}\right]\\
& \quad + b_{n,k}\sum_{j=1}^k \left[\left(\frac{j}{k+1}\right)^{-\rho_0} - \left(\frac{j}{k+1}\right)^{-\rho_c}\right] + \sum_{j=1}^k (Y_j - Z_j - \beta_j) + \sum_{j=1}^k \beta_j.
\end{align*}

Under the condition that  $k^{1/2 + \delta} b_{n,k} = O(1)$, we obtain the following bound on the second term in the sum above:
\[\left|b_{n,k}\sum_{j=1}^k \left[\left(\frac{j}{k+1}\right)^{-\rho_0} - \left(\frac{j}{k+1}\right)^{-\rho_c}\right] \right| \leq  2kb_{n,k} = o(\sqrt{k}).
\]
The order of the third term is
\[\left |\sum_{j=1}^k Y_j - Z_j - \beta_j\right | \leq k\max_{1\leq j\leq k}|Y_j-Z_j-\beta_j| = o_p(kb_{n,k}) = o_p(\sqrt k),
\]
and the order of the fourth term is
\[\left|\sum_{j=1}^k \beta_j\right| \leq \sum_{j=1}^k |\b_j| \leq k\sum_{j=1}^{k}\left|\frac{\b_j}{j}\right| = o_p(kb_{n,k}\log k) = o_p(\sqrt{k}).
\]
Combining the above arguments, we have 
\[\sum_{j=1}^k g_{1j}(Y_j;\thb_0) = \sum_{j=1}^k \left[Z_j - \gamma_0 - b_{n,k}\left(\frac{j}{k+1}\right)^{-\rho_0}\right] + o_p(\sqrt k).
\]
Note the following relationship between $g_{1j}(Y_j;\thb)$ and $g_{2j}(Y_j;\thb)$:
\[g_{2j}(Y_j;\thb) =  g_{1j}(Y_j;\thb)\left(\frac{j}{k+1}\right)^{-\rho_c}.
\]
By the integral approximation,
\[\lim_{k\to\infty}\sum_{j=1}^k \left(\frac{j}{k+1}\right)^{-\rho_c} = \frac{1}{1-\rho_c}.
\]
Hence,
\[\sum_{j=1}^k g_{2j}(Y_j;\thb_0) = 
\sum_{j=1}^k \left[Z_j - \gamma_0 - b_{n,k}\left(\frac{j}{k+1}\right)^{-\rho_0}\right]\left(\frac{j}{k+1}\right)^{-\rho_c} + o_p(\sqrt{k})
\]
so that
\[\sum_{j=1}^k \gb_j(Y_j;\thb_0) = 
\sum_{j=1}^k
\begin{pmatrix}
g_{1j}(Y_j;\thb_0) \\
g_{2j}(Y_j;\thb_0) \\
\end{pmatrix} = \sum_{j=1}^k
\begin{pmatrix}
Z_j - \gamma_0 - b_{n,k}\left(\frac{j}{k+1}\right)^{-\rho_0} \\
\left[Z_j - \gamma_0 - b_{n,k}\left(\frac{j}{k+1}\right)^{-\rho_0}\right]\left(\frac{j}{k+1}\right)^{-\rho_c}
\end{pmatrix} + o_p(\sqrt k).
\]

Note
\begin{align*}
v_j &:= Var\begin{pmatrix}
Z_j - \gamma_0 - b_{n,k}\left(\frac{j}{k+1}\right)^{-\rho_0} \\
\left[Z_j - \gamma_0 - b_{n,k}\left(\frac{j}{k+1}\right)^{-\rho_0}\right]\left(\frac{j}{k+1}\right)^{-\rho_c}
\end{pmatrix}
\\& = \left[\gamma_0 + b_{n,k}\left(\frac{j}{k+1}\right)^{-\rho_0}\right]^2 \begin{pmatrix}
1 & \left(\frac{j}{k+1}\right)^{-\rho_c} \\
\left(\frac{j}{k+1}\right)^{-\rho_c} & \left(\frac{j}{k+1}\right)^{-2\rho_c} 
\end{pmatrix}
\end{align*}
and
\[V_k := (1/k)\sum_{j=1}^k v_j = \frac{1}{k}\sum_{j=1}^k \left[\gamma_0 + b_{n,k}\left(\frac{j}{k+1}\right)^{-\rho_0}\right]^2  
\begin{pmatrix}
1 & \left(\frac{j}{k+1}\right)^{-\rho_c} \\
\left(\frac{j}{k+1}\right)^{-\rho_c} & \left(\frac{j}{k+1}\right)^{-2\rho_c} 
\end{pmatrix}.
\]
This leads to an upper bound:
\begin{align*}
V_k &\leq \frac{1}{k}\sum_{j=1}^k (\gamma_0 + |b_{n,k}|)^2 \begin{pmatrix}
1 & \left(\frac{j}{k+1}\right)^{-\rho_c} \\
\left(\frac{j}{k+1}\right)^{-\rho_c} & \left(\frac{j}{k+1}\right)^{-2\rho_c} 
\end{pmatrix}
= (\gamma_0 + |b_{n,k}|)^2 \frac{1}{k}\sum_{j=1}^k
\begin{pmatrix}
1 & \left(\frac{j}{k+1}\right)^{-\rho_c} \\
\left(\frac{j}{k+1}\right)^{-\rho_c} & \left(\frac{j}{k+1}\right)^{-2\rho_c} 
\end{pmatrix},
\end{align*}
where the inequality is interpreted component-wise.
By the integral approximation,
\[\frac{1}{k}\sum_{j=1}^k
\begin{pmatrix}
1 & \left(\frac{j}{k+1}\right)^{-\rho_c} \\
\left(\frac{j}{k+1}\right)^{-\rho_c} & \left(\frac{j}{k+1}\right)^{-2\rho_c} 
\end{pmatrix} \to 
\begin{pmatrix}
1 & \frac{1}{1-\rho_c}\\
\frac{1}{1-\rho_c} & \frac{1}{1-2\rho_c} 
\end{pmatrix},\qquad k\to\infty.
\]
Hence,
\[\limsup_{k\to\infty} V_k \leq \gamma_0 ^2 \begin{pmatrix}
1 & \frac{1}{1-\rho_c}\\
\frac{1}{1-\rho_c} & \frac{1}{1-2\rho_c} 
\end{pmatrix}.
\]
Similarly,
\[\liminf_{k\to\infty} V_k \geq \gamma_0^2 \begin{pmatrix}
1 & \frac{1}{1-\rho_c}\\
\frac{1}{1-\rho_c} & \frac{1}{1-2\rho_c} 
\end{pmatrix}.
\]
Therefore, 
\[\lim_{k\to \infty} V_k = \limsup_{k\to\infty} V_k = \liminf_{k\to\infty} V_k = \gamma_0^2 \begin{pmatrix}
1 & \frac{1}{1-\rho_c}\\
\frac{1}{1-\rho_c} & \frac{1}{1-2\rho_c}
\end{pmatrix} =: S_{11}.
\]

Next we verify the conditions for the Lindberg Feller's Central Limit Theorem (CLT). We have:
\[\max_{1\leq j \leq k}\frac{(v_j)_{1,1}}{\sum_{j=1}^k (v_j)_{1,1}} \leq 
\frac{(\gamma_0 + |b_{n,k}|)^2}{k(\gamma_0 - |b_{n,k}|)^2} \to 0,\qquad k \to \infty.
\]
Furthermore,
\[\max_{1\leq j \leq k}\frac{(v_j)_{1,2}}{\sum_{j=1}^k (v_j)_{1,2}} \leq 
\frac{(\gamma_0 + |b_{n,k}|)^2}{(\gamma_0 - |b_{n,k}|)^2\sum_{j=1}^k \left(\frac{j}{k+1}\right)^{-\rho_c}}\sim \frac{(1-\rho_c)(\gamma_0 + |b_{n,k}|)^2}{k(\gamma_0 - |b_{n,k}|)^2} \to 0,\qquad k\to\nf,
\]
and, similarly,
\[\max_{1\leq j \leq k}\frac{(v_j)_{2,2}}{\sum_{j=1}^k (v_j)_{2,2}} \leq 
\frac{(\gamma_0 + |b_{n,k}|)^2}{(\gamma_0 - |b_{n,k}|)^2\sum_{j=1}^k \left(\frac{j}{k+1}\right)^{-2\rho_c}}\sim \frac{(1-2\rho_c)(\gamma_0 + |b_{n,k}|)^2}{k(\gamma_0 - |b_{n,k}|)^2} \to 0,\qquad k\to\nf.
\]
Hence, by the Lindberg Feller's CLT,
\[\frac{1}{\sqrt{k}}\sum_{j=1}^k
\begin{pmatrix}
Z_j - \gamma_0 - b_{n,k}\left(\frac{j}{k+1}\right)^{-\rho_0} \\
\left[Z_j - \gamma_0 - b_{n,k}\left(\frac{j}{k+1}\right)^{-\rho_0}\right]\left(\frac{j}{k+1}\right)^{-\rho_c}
\end{pmatrix} \tod \mN_2(0,S_{11}),\qquad k\to\nf,\quad n\to\nf.
\]
and applying the Slutsky's theorem gives
\[\dfrac1{\sqrt{k}}\sum_{j=1}^k \gb_j(Y_j;\thb_0) \tod \mN_2(0,S_{11}).
\]

Recall that from \eqref{el.function} that the empirical log-likelihood function of $\thb$ is given by
\[\ell_E(\thb) = -k\log k - \sum_{j=1}^k \log[1+\lambdab^{\top}\gb_j(Y_j;\thb)],
\]
where $\lambdab$ is the Lagrange multiplier that satisfies 
\eqref{Lagrange} \citep{Owen1990}.
At the MELE $\widehat\thb_E$, we have 
\[\frac{\partial \ell_E(\thb)}{\partial \thb}\bigg|_{\thb=\widehat\thb_E} = \sum_{j=1}^{k} \frac{\frac{\partial \gb_j(Y_j;\widehat\thb_E)}{\partial \thb^{\top}}\widehat\lambdab}{1+\widehat{\lambdab}^{\top}\gb_j(Y_j;\widehat\thb_E)} = \bzero,
\]
where $\widehat\lambdab$ is the Lagrange multiplier at $\widehat\thb_E$ that satisfies
\[\sum_{j=1}^k \frac{\gb_j(Y_j;\widehat\thb_E)}{1+\widehat\lambdab^{\top}\gb_j(Y_j;\widehat\thb_E)} = \bzero.
\]
Define a function of $(\thb,\lambdab)$:
\[Q_k(\thb,\lambdab) = \begin{pmatrix}
Q_{1k}(\thb,\lambdab) \\
Q_{2k}(\thb,\lambdab)
\end{pmatrix} =
\frac{1}{k}\sum_{j=1}^k \begin{pmatrix}
\dfrac{\gb_j(Y_j;\thb)}{1+\lambdab^{\top}\gb_j(Y_j;\thb)} \\
\dfrac{\frac{\partial \gb_j(Y_j;\thb)}{\partial \thb^{\top}}\lambdab}{1+\lambdab^{\top}\gb_j(Y_j;\thb)}\\
\end{pmatrix}.
\]
Let $\bar \gb = (1/k)\sum_{j=1}^k \gb_j(Y_j;\thb_0).$
We then have 
\[Q_k(\thb_0,\zerob) = \begin{pmatrix}
\bar \gb \\ 
\bzero \\
\end{pmatrix}.
\]
By the Taylor expansion of $Q_k$ at $(\thb_0,\zerob)$ (see \cite{QinLawless1994}), we obtain
\[ \bzero = Q_k(\widehat\thb_E,\widehat\lambdab) = Q_k(\thb_0,\zerob) + \begin{pmatrix}
\dfrac{\partial Q_{1k}(\thb_0,\zerob)}{\partial \thb} & \dfrac{\partial Q_{1k}(\thb_0,\zerob)}{\partial \lambdab} \\
\dfrac{\partial Q_{2k}(\thb_0,\zerob)}{\partial \thb} &
\dfrac{\partial Q_{2k}(\thb_0,\zerob)}{\partial \lambdab} \\
\end{pmatrix} \begin{pmatrix}
\widehat\thb_E - \thb_0 \\
\widehat\lambdab - \zerob \\
\end{pmatrix} + o_p(\norm{\widehat\thb_E - \thb_0} + \norm{\widehat\lambdab}).
\]
We next calculate the derivatives of $Q_k$. 
The partial derivatives of $Q_{1k}$ are
\[\frac{\partial Q_{1k}(\thb_0,\zerob)}{\partial \thb} = \frac{1}{k}\sum_{j=1}^k \frac{\partial \gb_j(Y_j;\thb_0)}{\partial \thb} = 
\frac{1}{k}\sum_{j=1}^k
\begin{pmatrix}
-1 & -(\frac{j}{k+1})^{-\rho_c} \\
-(\frac{j}{k+1})^{-\rho_c} & -(\frac{j}{k+1})^{-2\rho_c} \\
\end{pmatrix} = S_{12} + o(1),
\]
where
\[S_{12} = 
\begin{pmatrix}
-1 & -\frac{1}{1-\rho_c} \\
-\frac{1}{1-\rho_c} & -\frac{1}{1-2\rho_c} \\
\end{pmatrix} 
\]
and
\[\frac{\partial Q_{1k}(\thb_0,\zerob)}{\partial \lambdab} = \frac{1}{k}\sum_{j=1}^k 
\gb_j(Y_j;\thb_0)\gb_j^{\top}(Y_j;\thb_0)
= - S_{11} + o(1).
\]
The partial derivatives of $Q_{2k}$ are
\[\frac{\partial Q_{2k}(\thb_0,\zerob)}{\partial \thb}  = \begin{pmatrix}
0 & 0 \\
0 & 0 \\
\end{pmatrix}
\]
and
\[\frac{\partial Q_{2k}(\thb_0,\zerob)}{\partial \lambdab} = \frac{1}{k}\sum_{j=1}^k \frac{\partial \gb_j(Y_j;\thb_0)}{\partial\thb}  = S_{12} + o(1).
\]
To summarize, we have
\[\begin{pmatrix}
\bar \gb \\
\bzero \\
\end{pmatrix} + \begin{pmatrix}
S_{12} & -S_{11} \\
\textbf{0} & S_{12} \\
\end{pmatrix} \begin{pmatrix}
\widehat\thb_E - \thb_0 \\
\widehat\lambdab \\
\end{pmatrix} = o_p(\norm{\widehat\thb_E - \thb_0} + \norm{\widehat\lambdab}).
\]
Since $\sqrt k \bar \gb \tod \mN_2(\bzero, S_{11})$, then $\bar \gb = O_p(k^{-1/2})$, and $\norm{\widehat\thb_E - \thb_0} + \norm{\widehat\lambdab} = O_p(k^{-1/2}).$
Hence,
\begin{equation}
\label{eqn1}
S_{12}(\widehat\thb_E - \thb_0) - S_{11}\widehat\lambdab = -\bar{\gb} + o_p(k^{-1/2}),
\end{equation}
and
\begin{equation}
\label{eqn2}
S_{12}\widehat\lambdab = o_p(k^{-1/2}).
\end{equation}
Multiplying \eqref{eqn1} by $S_{12}S_{11}^{-1}$, we get
\[S_{12}S_{11}^{-1}S_{12}(\widehat\thb_E - \thb_0) = -S_{12}S_{11}^{-1}\bar \gb + o_p(k^{-1/2}).
\]
Therefore,
\[\sqrt{k}(\widehat\thb_E - \thb_0) = -[S_{12}S_{11}^{-1}S_{12}]^{-1}S_{12}S_{11}^{-1}\sqrt{k}\bar{\gb} + o_p(1).
\]
Again using the fact that $\sqrt k \bar \gb  \tod \mN_2(\bzero, S_{11})$, we obtain
\[\sqrt k(\widehat\thb_E - \thb_0) \tod \mN_2(\bzero, \Sigma),
\]
where 
\begin{align*}
\Sigma &= [S_{12}S_{11}^{-1}S_{12}]^{-1}S_{12}S_{11}^{-1}S_{11}S_{11}^{-1}S_{12}[S_{12}S_{11}^{-1}S_{12}]^{-1} 
\\&=
[S_{12}S_{11}^{-1}S_{12}]^{-1}
\\& =
\frac{\gamma_0^2}{\rho_c^2}\begin{pmatrix}
(1-\rho_c)^2 & -(1-\rho_c)(1-2\rho_c) \\
-(1-\rho_c)(1-2\rho_c) & (1-\rho_c)^2(1-2\rho_c) \\
\end{pmatrix}.
\end{align*}
Therefore, as $n\to \infty$,
\[\sqrt{k}(\widehat\gamma_E - \gamma_0) \tod \mN\left(0,\left(\frac{1-\rho_c}{\rho_c}\right)^2\gamma_0^2\right),
\]
and
\[\sqrt{k}(\widehat b_E - b_{n,k}) \tod \mN\left(0,\frac{(1-\rho_c)^2(1-2\rho_c)}{\rho_c^2}\gamma_0^2\right).
\]
\end{proof}

\subsection{Proof of Theorem~\ref{tLR}}\label{sA2}

\begin{proof} 

From \eqref{el.function}, the empirical log-likelihood function of $\thb$ is 
\[\ell_E(\thb) = -k\log k - \sum_{j=1}^k \log[1+\lambdab^{\top}\gb_j(Y_j;\thb)],
\]
where $\lambdab$ is the Lagrange multiplier satisfying
\begin{equation}
\sum_{j=1}^k \frac{\gb_j(Y_j;\thb)}{1+\lambdab^{\top}\gb_j(Y_j;\thb)} = \bzero.
\end{equation}
Since $\dim(\gb_j) = \dim(\thb)$, at $\thb = \widehat\thb_E$, the solutions to the optimization problem in \eqref{optimization} are $p_j = 1/k$, $j=1,\ldots, k$.
Hence $\max_{\thb}\ell_E(\thb) = \ell_E(\widehat\thb_E) = -k\log k$.

Let $(\gamma_0,\tilde{b})$ denote the MELE under $H_0: \gamma = \gamma_0$, and $\tilde{\lambdab}$ the Lagrange multiplier at $\thb = (\gamma_0,\tilde{b})$, which is a $2\times 1$ vector.
Hence, $R(\gamma_0) = 2\sum_{j=1}^k \log[1+\tilde{\lambdab}^{\top}\gb_j(Y_j;\gamma_0,\tilde{b})].$ 

Furthermore, 
\[\frac{\partial \ell_E(\gamma_0,b)}{\partial b}\Bigg |_{b=\tilde b} = 
\sum_{j=1}^k \frac{ \left(\frac{\partial \gb_j(Y_j;\gamma_0,\tilde{b})}{\partial b}\right)^{\top}\tilde{\lambdab}}{1+\tilde{\lambdab}^{\top}\gb_j(Y_j;\gamma_0,\tilde{b})} = 0.
\]
Due to \eqref{Lagrange}, we also have
\[\sum_{j=1}^k \frac{\gb_j(Y_j;\gamma_0,\tilde{b})}{1+\tilde{\lambdab}^{\top}\gb_j(Y_j;\gamma_0,\tilde{b})} = \bzero.
\]
Define a function of $b$ and $\lambdab$:
\[P_k(b,\lambdab) = \dfrac{1}{k}\sum_{j=1}^k \begin{pmatrix}
\dfrac{\gb_j(Y_j;\gamma_0,b)}{1+\lambdab^{\top}\gb_j(Y_j;\gamma_0,b)}\\
\dfrac{\left(\frac{\partial \gb_j(Y_j;\gamma_0,b)}{\partial b}\right)^{\top}\lambdab }{1+\lambdab^{\top}\gb_j(Y_j;\gamma_0,b)}\\
\end{pmatrix}.
\]
Note that 
\[P_k(b_{n,k},\zerob) = \begin{pmatrix}
\bar{\gb} \\ 
0 \\
\end{pmatrix}\quad\text{with } \bar\gb= (1/k)\sum_{j=1}^k \gb_j(Y_j;\g_0, b_{n,k}).
\]
By a similar Taylor expansion of $P_k$ at $(\thb_0,\bzero)$ and calculations as in the proof of Theorem~\ref{asymptotic.normality},  we get 
\[\bzero = P_k(\tilde{b},\tilde{\lambdab}) =  \begin{pmatrix}
\bar{\gb} \\ 
0 \\
\end{pmatrix} + 
\begin{pmatrix}
A_{12} & -S_{11} \\
\bzero & A_{12} \\
\end{pmatrix}
\begin{pmatrix}
\tilde{b} - b_{n,k}\\
\tilde{\lambdab}\\
\end{pmatrix} + o_p(\norm{\tilde{b} - b_{n,k}} + \norm{\tilde{\lambdab}}),
\]
where 
\[A_{12} = \begin{pmatrix}
-\frac{1}{1-\rho_c} \\
-\frac{1}{1-2\rho_c}\\
\end{pmatrix}. 
\]
Since $\bar{\gb} = O_p(k^{-1/2})$, then similar to the proof of Theorem~\ref{asymptotic.normality} we see that
\[\norm{\tilde{b} - b_{n,k}} + \norm{\tilde{\lambdab}} = O_p(k^{-1/2}).
\]
Hence,
\begin{equation}
\label{eqn3}
A_{12}(\tilde{b} - b_{n,k}) - S_{11}\tilde{\lambdab} = -\bar{\gb} + o_p(k^{-1/2})
\end{equation}
and
\begin{equation}
\label{eqn4}
A_{12}^{\top}\tilde{\lambdab} = o_p(k^{-1/2}).
\end{equation}
Define $I_d$ as the identity matrix with rank $d$.
Through similar derivations as in the proof of Theorem~\ref{asymptotic.normality},
\[\tilde{b} - b_{n,k} = -[A_{12}S_{11}^{-1}A_{12}]^{-1}A_{12}S_{11}^{-1}\bar{\gb}.
\] 
Multiplying both sides of \eqref{eqn3} by $A_{12}S_{11}^{-1},$ and substituting this expression for $\tilde{b},$
we get
\[\tilde{\lambdab} = [I_2-A_{12}(A_{12}^{\top}S_{11}^{-1}A_{12})^{-1}A_{12}^{\top}S_{11}^{-1}]\ \bar{\gb} + o_p(k^{-1/2}).
\]
Therefore,
\begin{align*}
R(\gamma_0) &=
2\sum_{j=1}^k \log[1+\tilde{\lambdab}^{\top}\gb_j(Y_j;\gamma_0,\tilde{b})] 
\\&= 2\tilde{\lambdab}^{\top}\sum_{j=1}^k \gb_j(Y_j;\gamma_0,\tilde{b}) - \tilde{\lambdab}^{\top}\sum_{j=1}^k \gb_j(Y_j;\gamma_0,\tilde{b})\gb_j^{\top}(Y_j;\gamma_0,\tilde{b}) \tilde{\lambdab} + o_p(1)
\\&= (\sqrt{k}S_{11}^{-1/2}\bar{\gb})^{\top}\{S_{11}^{-1/2}[I_2-A_{12}(A_{12}^{\top}S_{11}^{-1}A_{12})^{-1}A_{12}^{\top}]S_{11}^{-1/2}\}\sqrt{k}S_{11}^{-1/2}\bar{\gb} + o_p(1).
\end{align*}

Let $H = S_{11}^{-1/2}[A_{12}(A_{12}^{\top}S_{11}^{-1}A_{12})^{-1}A_{12}^{\top}]S_{11}^{-1/2}.$
Here $A_{12}^{\top}S_{11}^{-1}A_{12}$ is a scalar. So, \[{\rm rank}(A_{12}(A_{12}^{\top}S_{11}^{-1}A_{12})^{-1}A_{12}^{\top}) = {\rm rank}(A_{12}A_{12}^{\top}) = 1.
\] 
Since $S_{11}^{-1/2}$ has full rank, then ${\rm rank}(H) = {\rm rank}(A_{12}A_{12}^{\top}) = 1.$
We have $H^{\top} = H$ and 
\begin{align*}
H^2 &=  S_{11}^{-1/2}A_{12}(A_{12}^{\top}S_{11}^{-1}A_{12})^{-1}(A_{12}^{\top}S_{11}^{-1}A_{12})(A_{12}^{\top}S_{11}^{-1}A_{12})^{-1}A_{12}^{\top}S_{11}^{-1/2}
\\&= S_{11}^{-1/2}[A_{12}(A_{12}^{\top}S_{11}^{-1}A_{12})^{-1}A_{12}^{\top}]S_{11}^{-1/2}
\\&= H.
\end{align*}
Hence, $H$ is symmetric and idempotent. Therefore,
\[I_2 - H = I_2-S_{11}^{-1/2}[A_{12}(A_{12}^{\top}S_{11}^{-1}A_{12})^{-1}A_{12}^{\top}]S_{11}^{-1/2}.
\] 
Since $\sqrt{k}S_{11}^{-1/2}\bar{\gb} \tod \mN_2(\bzero,I_2)$, then by the Cochran's theorem \citep{Cochran1934}, $R(\gamma_0) \tod \chi_1^2.$
\end{proof}

\end{document}